\newcommand{\R}{\mathbb{R}}
\newcommand{\Z}{\mathbb{Z}}
\newcommand{\E}{\mathbb{E}}
\newcommand{\ket}[1]{| #1 \rangle}
\newcommand{\ip}[2]{\langle #1|#2 \rangle}
\newcommand{\proj}[1]{| #1 \rangle \langle #1 |}
\newcommand{\bracket}[3]{\langle #1|#2|#3 \rangle}
\DeclareMathOperator{\tr}{tr}
\DeclareMathOperator{\Var}{Var}
\newcommand{\be}{\begin{equation}}
\newcommand{\ee}{\end{equation}}
\newcommand{\bea}{\begin{eqnarray}}
\newcommand{\eea}{\end{eqnarray}}
\newcommand{\bes}{\begin{equation*}}
\newcommand{\ees}{\end{equation*}}
\newcommand{\beas}{\begin{eqnarray*}}
\newcommand{\eeas}{\end{eqnarray*}}
\newtheorem{thm}{Theorem}
\newtheorem*{thm*}{Theorem}
\newtheorem{cor}[thm]{Corollary}
\newtheorem{lem}[thm]{Lemma}
\newtheorem*{lem*}{Lemma}
\newtheorem{prop}[thm]{Proposition}
\newtheorem{claim}[thm]{Claim}
\newtheorem{dfn}{Definition}
\newcommand{\boxalgm}[3]{
\renewcommand{\figurename}{Algorithm}
\begin{figure}[htp]
\noindent \framebox{
\begin{minipage}{\textwidth-0.5cm}
#3
\end{minipage}
}
\caption{#2}
\label{#1}
\end{figure}
\renewcommand{\figurename}{Figure}
}
\newcommand{\boxalgmfig}[3]{
\begin{figure}
\noindent \framebox{
\begin{minipage}{\textwidth-0.5cm}
#3
\end{minipage}
}
\caption{#2}
\label{#1}
\end{figure}
}
\begin{document}


\title{Quantum speedup of Monte Carlo methods}
\author{Ashley Montanaro\thanks{Department of Computer Science, University of Bristol, UK; {\tt ashley.montanaro@bristol.ac.uk}.}}
\maketitle

\begin{abstract}
Monte Carlo methods use random sampling to estimate numerical quantities which are hard to compute deterministically. One important example is the use in statistical physics of rapidly mixing Markov chains to approximately compute partition functions. In this work we describe a quantum algorithm which can accelerate Monte Carlo methods in a very general setting. The algorithm estimates the expected output value of an arbitrary randomised or quantum subroutine with bounded variance, achieving a near-quadratic speedup over the best possible classical algorithm. Combining the algorithm with the use of quantum walks gives a quantum speedup of the fastest known classical algorithms with rigorous performance bounds for computing partition functions, which use multiple-stage Markov chain Monte Carlo techniques. The quantum algorithm can also be used to estimate the total variation distance between probability distributions efficiently.
\end{abstract}


\section{Introduction}

Monte Carlo methods are now ubiquitous throughout science, in fields as diverse as statistical physics~\cite{krauth06}, microelectronics~\cite{jacoboni89} and mathematical finance~\cite{glasserman03}. These methods use randomness to estimate numerical properties of systems which are too large or complicated to analyse deterministically. In general, the basic core of Monte Carlo methods involves estimating the expected output value $\mu$ of a randomised algorithm $\mathcal{A}$. The natural algorithm for doing so is to produce $k$ samples, each corresponding to the output of an independent execution of $\mathcal{A}$, and then to output the average $\widetilde{\mu}$ of the samples as an approximation of $\mu$. Assuming that the variance of the random variable corresponding to the output of $\mathcal{A}$ is at most $\sigma^2$, the probability that the value output by this estimator is far from the truth can be bounded using Chebyshev's inequality:
\[ \Pr[|\widetilde{\mu} - \mu| \ge \epsilon] \le \frac{\sigma^2}{k \epsilon^2}. \]
It is therefore sufficient to take $k = O(\sigma^2 / \epsilon^2)$ to estimate $\mu$ up to additive error $\epsilon$ with, say, 99\% success probability. This simple result is a key component in many more complex randomised approximation schemes (see e.g.~\cite{stefankovic09,krauth06}).

Although this algorithm is fairly efficient, its quadratic dependence on $\sigma / \epsilon$ seems far from ideal: for example, if $\sigma = 1$, to estimate $\mu$ up to 4 decimal places we would need to run $\mathcal{A}$ over 100 million times. Unfortunately, it can be shown that, without any further information about $\mathcal{A}$, the sample complexity of this algorithm is asymptotically optimal~\cite{dagum00} with respect to its scaling with $\sigma$ and $\epsilon$, although it can be improved by a constant factor~\cite{huber14}.

We show here that, using a quantum computer, the number of uses of $\mathcal{A}$ required to approximate $\mu$ can be reduced almost quadratically beyond the above classical bound. Assuming that the variance of the output of the algorithm $\mathcal{A}$ is at most $\sigma^2$, we present a quantum algorithm which estimates $\mu$ up to additive error $\epsilon$, with $99\%$ success probability, using $\mathcal{A}$ only $\widetilde{O}(\sigma/ \epsilon)$ times\footnote{The $\widetilde{O}$ notation hides polylogarithmic factors.}. It follows from known lower bounds on the quantum complexity of approximating the mean~\cite{nayak99} that the runtime of this algorithm is optimal, up to polylogarithmic factors. This result holds for an {\em arbitrary} algorithm $\mathcal{A}$ used as a black box, given only an upper bound on the variance.

An important aspect of this construction is that the underlying subroutine $\mathcal{A}$ need not be a classical randomised procedure, but can itself be a quantum algorithm. This enables any quantum speedup obtained by $\mathcal{A}$ to be utilised within the overall framework of the algorithm. A particular case in which this is useful is quantum speedup of Markov chain Monte Carlo methods~\cite{levin09}. Classically, such methods use a rapidly mixing Markov chain to approximately sample from a probability distribution corresponding to the stationary distribution of the chain. Quantum walks are the quantum analogue of random walks (see e.g.~\cite{venegas12} for a review). In some cases, quantum walks can reduce the mixing time quadratically (see e.g.~\cite{aharonov01,wocjan08}), although it is not known whether this can be achieved in general~\cite{richter07,aharonov07a,dunjko15}. We demonstrate that this known quadratic reduction can be combined with our algorithm to speed up the fastest known general-purpose classical algorithm with rigorous performance bounds~\cite{stefankovic09} for approximately computing partition functions up to small relative error, a fundamental problem in statistical physics~\cite{krauth06}. As another example of how our algorithm can be applied, we substantially improve the runtime of a quantum algorithm for estimating the total variation distance between two probability distributions~\cite{bravyi11a}.


\subsection{Prior work}
\label{sec:prior}

The topic of quantum estimation of mean output values of algorithms with bounded variance connects to several previously-explored directions. First, it generalises the problem of approximating the mean, with respect to the uniform distribution, of an arbitrary bounded function. This has been addressed by a number of authors. The first asymptotically optimal quantum algorithm for this problem, which uses $O(1/\epsilon)$ queries to achieve additive error $\epsilon$, seems to have been given by Heinrich~\cite{heinrich01}; an elegant alternative optimal algorithm was later presented by Brassard et al.~\cite{brassard11}. Previous algorithms, which are optimal up to lower-order terms, were described by Grover~\cite{grover98}, Aharonov~\cite{aharonov98} and Abrams and Williams~\cite{abrams99}. Using similar techniques to Brassard et al., Wocjan et al.~\cite{wocjan09} described an efficient algorithm for estimating the expected value of an arbitrary bounded observable. It is not difficult to combine these ideas to approximate the mean of arbitrary bounded functions with respect to nonuniform distributions (see Section \ref{sec:bounded}).

One of the main technical ingredients in the present paper is based on an algorithm of Heinrich for approximating the mean, with respect to the uniform distribution, of functions with bounded $L^2$ norm~\cite{heinrich01}. Section \ref{sec:boundedl2} describes a generalisation of this result to nonuniform distributions, using similar techniques. This is roughly analogous to the way that amplitude amplification~\cite{brassard02} generalises Grover's quantum search algorithm~\cite{grover97}.

The related problem of quantum estimation of expectation values of observables, an important task in the simulation of quantum systems, has been studied by Knill, Ortiz and Somma~\cite{knill07}. These authors give an algorithm for estimating $\tr(A \rho)$ for observables $A$ such that one can efficiently implement the operator $e^{-iAt}$. The algorithm is efficient (i.e.\ achieves runtimes close to $O(1/\epsilon)$) when the tails of the distribution $\tr(A \rho)$ decay quickly. However, in the case where one only knows an upper bound on the variance of this distribution, the algorithm does not achieve a better runtime than classical sampling. Yet another related problem, that of exact Monte Carlo {\em sampling} from a desired probability distribution, was addressed by Destainville, Georgeot and Giraud~\cite{destainville10}. Their quantum algorithm, which uses Grover's algorithm as a subroutine, achieves roughly a quadratic speedup over classical exact sampling. This algorithm's applicability is limited by the fact that its runtime scaling can be as slow as $O(\sqrt{N})$, where $N$ is the number of states of the system; we often think of $N$ as being exponential in the input size.

Quantum algorithms have been used previously to approximate classical partition functions and solve related problems. In particular, a number of authors~\cite{lidar97,lidar04,aharonov07,vandennest08a,geraci08,arad10,geraci10,delascuevas11,matsuo14} have considered the complexity of computing Ising and Potts model partition functions. These works in some cases achieve exponential quantum speedups over the best known classical algorithms. Unfortunately, they in general either produce an approximation accurate up to a specified {\em additive} error bound, or only work for specific classes of partition function problems with restrictions on interaction strengths and topologies, or both. Here we aim to approximate partition functions up to small relative error in a rather general setting.

Using related techniques to the present work, Somma et al.~\cite{somma08} used quantum walks to accelerate classical simulated annealing processes, and quantum estimation of partition functions up to small relative error was addressed by Wocjan et al.~\cite{wocjan09}. Their algorithm, which is based on the use of quantum walks and amplitude estimation, achieves a quadratic speedup over classical algorithms with respect to both mixing time and accuracy. However, it cannot be directly applied to accelerate the most efficient classical algorithms for approximating partition function problems, which use so-called Chebyshev cooling schedules (discussed in Section \ref{sec:partition}). This is essentially because these algorithms are based around estimating the mean of random variables given only a bound on the variance. This was highlighted as an open problem in~\cite{wocjan09}, which we resolve here.

Several recent works have developed quantum algorithms for the quantum generalisation of sampling from a Gibbs distribution: producing a Gibbs state $\rho \propto e^{-\beta H}$ for some quantum Hamiltonian $H$~\cite{tucci09,poulin09,temme11,yung12}. Given such a state, one can measure a suitable observable to compute some quantity of interest about $H$. Supplied with an upper bound on the variance of such an observable, the procedure detailed here can be used (as for any other quantum algorithm) to reduce the number of repetitions required to estimate the observable to a desired accuracy.


\subsection{Techniques}
\label{sec:approx}

\begin{table}
\begin{center}
\begin{tabular}{|c|l|l|c|}
\hline Algorithm & Precondition & Approximation of $\mu$ & Uses of $\mathcal{A}$ and $\mathcal{A}^{-1}$ \\
\hline \ref{alg:meanbounded} & $v(\mathcal{A}) \in [0,1]$ & Additive error $\epsilon$ & $O(1/\epsilon)$\\
\ref{alg:meanvariance} & $\Var(v(\mathcal{A})) \le \sigma^2 $ & Additive error $\epsilon$ & $\widetilde{O}(\sigma/\epsilon)$\\
\ref{alg:meanratio} & $\Var(v(\mathcal{A}))/(\E[v(\mathcal{A})])^2 \le B$ & Relative error $\epsilon$ & $\widetilde{O}(B/\epsilon)$\\
\hline
\end{tabular}
\end{center}
\caption{Summary of the main quantum algorithms presented in this paper for estimating the mean output value $\mu$ of an algorithm $\mathcal{A}$. (Algorithm \ref{alg:meansub}, omitted, is a subroutine used in Algorithm \ref{alg:meanvariance}.)}
\label{tab:algs}
\end{table}

We now give an informal description of our algorithms, which are summarised in Table \ref{tab:algs} (for technical details and proofs, see Section~\ref{sec:algs}). For any randomised or quantum algorithm $\mathcal{A}$, we write $v(\mathcal{A})$ for the random variable corresponding to the value computed by $\mathcal{A}$, with the expected value of $v(\mathcal{A})$ denoted $\E[v(\mathcal{A})]$. For concreteness, we think of $\mathcal{A}$ as a quantum algorithm which operates on $n$ qubits, each initially in the state $\ket{0}$, and whose quantum part finishes with a measurement of $k$ of the qubits in the computational basis. Given that the measurement returns outcome $x\in \{0,1\}^k$, the final output is then $\phi(x)$, for some fixed function $\phi:\{0,1\}^k \rightarrow \R$. If $\mathcal{A}$ is a classical randomised algorithm, or a quantum circuit using (for example) mixed states and intermediate measurements, a corresponding unitary quantum circuit of this form can be produced using standard reversible-computation techniques~\cite{aharonov98a}. As is common in works based on quantum amplitude amplification and estimation~\cite{brassard02}, we also assume that we have the ability to execute the algorithm $\mathcal{A}^{-1}$, which is the inverse of the unitary part of $\mathcal{A}$. If we do have a description of $\mathcal{A}$ as a quantum circuit, this can be achieved simply by running the circuit backwards, replacing each gate with its inverse.

We first deal with the special case where the output of $\mathcal{A}$ is bounded between 0 and 1. Here a quantum algorithm for approximating $\mu := \E[v(\mathcal{A})]$ quadratically faster than is possible classically can be found by combining ideas from previously known algorithms~\cite{heinrich01,brassard11,wocjan09}. We append an additional qubit and define a unitary operator $W$ on $k+1$ qubits which performs the map $\ket{x}\ket{0} \mapsto \ket{x}(\sqrt{1-\phi(x)}\ket{0} + \sqrt{\phi(x)}\ket{1})$. If the final measurement of the algorithm $\mathcal{A}$ is replaced with performing $W$, then measuring the added qubit, the probability that we receive the answer 1 is precisely $\mu$. Using quantum amplitude estimation~\cite{brassard02} the probability that this measurement returns 1 can be estimated to higher accuracy than is possible classically. Using $t$ iterations of amplitude estimation, we can output an estimate $\widetilde{\mu}$ such that $|\widetilde{\mu} - \mu|  = O(\sqrt{\mu} / t + 1 / t^2)$ with high probability~\cite{brassard02}. In particular, $O(1/\epsilon)$ iterations of amplitude estimation are sufficient to produce an estimate $\widetilde{\mu}$ such that $|\widetilde{\mu} - \mu| \le \epsilon$ with, say, 99\% probability.

The next step is to use the above algorithm as a subroutine in a more general procedure that can deal with algorithms $\mathcal{A}$ whose output is non-negative, has bounded $\ell_2$ norm, but is not necessarily bounded between 0 and 1. That is, algorithms for which we can control the expression $\left\|v(\mathcal{A})\right\|_2 := \sqrt{\E[v(\mathcal{A})^2]}$. The procedure for this case generalises, and is based on the same ideas as, a previously known result for the uniform distribution~\cite{heinrich01}.

The idea is to split the output of $\mathcal{A}$ up into disjoint intervals depending on size. Write $\mathcal{A}_{p,q}$ for the ``truncated'' algorithm which outputs $v(\mathcal{A})$ if $p \le v(\mathcal{A}) < q$, and otherwise outputs 0. We estimate $\mu$ by applying the above algorithm to estimate $\E[v(\mathcal{A}_{p,q})]$ for a sequence of $O(\log 1/\epsilon)$ intervals which are exponentially increasing in size, and summing the results. As the intervals $[p,q)$ get larger, the accuracy with which we approximate $\E[v(\mathcal{A}_{p,q})]$ decreases, and values $v(\mathcal{A})$ larger than about $1/\epsilon$ are ignored completely. However, the overall upper bound on $\left\|v(\mathcal{A})\right\|_2$ allows us to infer that these larger values do not affect the overall expectation $\mu$ much; indeed, if $\mu$ depended significantly on large values in the output, the $\ell_2$ norm of $v(\mathcal{A})$ would be high.

The final result is that for $\left\|v(\mathcal{A})\right\|_2 = O(1)$, given appropriate parameter choices, the estimate $\widetilde{\mu}$ satisfies $|\widetilde{\mu} - \mu| = O(\epsilon)$ with high probability, and the algorithm uses $\mathcal{A}$ $\widetilde{O}(1/\epsilon)$ times in total. This scaling is a near-quadratic improvement over the best possible classical algorithm. 

We next consider the more general case of algorithms $\mathcal{A}$ which have bounded variance, but whose output need not be non-negative, nor bounded in $\ell_2$ norm. To apply the previous algorithm, we would like to transform the output of $\mathcal{A}$ to make its $\ell_2$ norm low. If $v(\mathcal{A})$ has mean $\mu$ and variance upper-bounded by $\sigma^2$, a suitable way to achieve this is to subtract $\mu$ from the output of $\mathcal{A}$, then divide by $\sigma$. The new algorithm's output would have $\ell_2$ norm upper-bounded by 1, and estimating its expected value up to additive error $\epsilon / \sigma$ would give us an estimate of $\mu$ up to $\epsilon$. Unfortunately, we of course do not know $\mu$ initially, so cannot immediately implement this idea. To approximately implement it, we first run $\mathcal{A}$ once and use the output $\widetilde{m}$ as a proxy for $\mu$. Because $\Var(v(\mathcal{A})) \le \sigma^2$, $\widetilde{m}$ is quite likely to be within distance $O(\sigma)$ of $\mu$. Therefore, the algorithm $\mathcal{B}$ produced from $\mathcal{A}$ by subtracting $\widetilde{m}$ and dividing by $\sigma$ is quite likely to have $\ell_2$ norm upper-bounded by a constant. We can thus efficiently estimate the positive and negative parts of $\E[v(\mathcal{B})]$ separately, then combine and rescale them. The overall algorithm achieves accuracy $\epsilon$ in time $\widetilde{O}(\sigma / \epsilon)$.

A similar idea can be used to approximate the expected output value of algorithms for which we have a bound on the relative variance, namely that $\Var(v(\mathcal{A})) = O(\mu^2)$. In this setting it turns out that $\widetilde{O}(1/\epsilon)$ uses of $\mathcal{A}$ suffice to produce an estimate $\widetilde{\mu}$ accurate up to {\em relative} error $\epsilon$, i.e.\ for which $|\widetilde{\mu} - \mu| \le \epsilon\mu$. This is again a near-quadratic improvement over the best possible classical algorithm.


\subsection{Approximating partition functions}
\label{sec:approxpart}

In this section we discuss (with details in Section~\ref{sec:partition}) how these algorithms can be applied to the problem of approximating partition functions. Consider a (classical) physical system which has state space $\Omega$, together with a Hamiltonian $H:\Omega \rightarrow \R$ specifying the energy of each configuration\footnote{We use $x$ to label configurations rather than the more standard $\sigma$ to avoid confusion with the variance.} $x \in \Omega$. Here we will assume that $H$ takes integer values in the set $\{0,\dots,n\}$. A central problem is to compute the partition function
\[ Z(\beta) = \sum_{x \in \Omega} e^{-\beta\,H(x)} \]
for some inverse temperature $\beta$ defined by $\beta = 1 / (k_BT)$, where $T$ is the temperature and $k_B$ is Boltzmann's constant. As well as naturally encapsulating various models in statistical physics, such as the Ising and Potts models, this framework also encompasses well-studied problems in computer science, such as counting the number of valid $k$-colourings of a graph. In particular, $Z(\infty)$ counts the number of configurations $x$ such that $H(x) = 0$. It is often hard to compute $Z(\beta)$ for large $\beta$ but easy to approximate $Z(\beta) \approx |\Omega|$ for $\beta \approx 0$. In many cases, such as the Ising model, it is known that computing $Z(\infty)$ exactly falls into the \#P-complete complexity class~\cite{jerrum93}, and hence is unlikely to admit an efficient quantum or classical algorithm.

Here our goal will be to approximate $Z(\beta)$ up to relative error $\epsilon$, for some small $\epsilon$. That is, to output $\widetilde{Z}$ such that $|\widetilde{Z} - Z(\beta)| \le \epsilon\,Z(\beta)$, with high probability.
For simplicity, we will focus on $\beta = \infty$ in the following discussion, but it is easy to see how to generalise to arbitrary $\beta$.

Let $0 = \beta_0 < \beta_1 < \dots < \beta_\ell = \infty$ be a sequence of inverse temperatures. A standard classical approach to design algorithms for approximating partition functions~\cite{valleau72,dyer92,bezakova08,stefankovic09,wocjan09} is based around expressing $Z(\beta_\ell)$ as the telescoping product
\[ Z(\beta_\ell) = Z(\beta_0) \frac{Z(\beta_1)}{Z(\beta_0)} \frac{Z(\beta_2)}{Z(\beta_1)} \dots \frac{Z(\beta_\ell)}{Z(\beta_{\ell-1})}. \]
If we can compute $Z(\beta_0) = |\Omega|$, and can also approximate each of the ratios $\alpha_i := Z(\beta_{i+1}) / Z(\beta_i)$ accurately, taking the product will give a good approximation to $Z(\beta_\ell)$. Let $\pi_i$ denote the Gibbs (or Boltzmann) probability distribution corresponding to inverse temperature $\beta_i$, where
\[ \pi_i(x) = \frac{1}{Z(\beta_i)} e^{-\beta_i H(x)}. \]
To approximate $\alpha_i$ we define the random variable
\[ Y_i(x) = e^{-(\beta_{i+1}-\beta_i) H(x) }. \] 
Then one can readily compute that $\E_{\pi_i}[Y_i] = \alpha_i$, so sampling from each distribution $\pi_i$ allows us to estimate the quantities $\alpha_i$. It will be possible to estimate $\alpha_i$ up to small relative error efficiently if the ratio $\E[Y_i^2]/\E[Y_i]^2$ is low. This motivates the concept of a {\em Chebyshev cooling schedule}~\cite{stefankovic09}: a sequence of inverse temperatures $\beta_i$ such that $\E[Y_i^2]/\E[Y_i]^2 = O(1)$ for all $i$. It is known that, for any partition function problem as defined above such that $|\Omega| = A$, there exists a Chebyshev cooling schedule with $\ell = \widetilde{O}(\sqrt{\log A})$~\cite{stefankovic09}. 

It is sufficient to approximate $\E[Y_i]$ up to relative error $O(\epsilon / \ell)$ for each $i$ to get an overall approximation accurate up to relative error $\epsilon$. To achieve this, the quantum algorithm presented here needs to use at most $\widetilde{O}( \ell / \epsilon)$ samples from $Y_i$. Given a Chebyshev cooling schedule with $\ell = \widetilde{O}(\sqrt{\log A})$, the algorithm thus uses $\widetilde{O}((\log A)/\epsilon)$ samples in total, a near-quadratic improvement in terms of $\epsilon$ over the complexity of the fastest known classical algorithm~\cite{stefankovic09}.

In general, we cannot exactly sample from the distributions $\pi_i$. Classically, one way of approximately sampling from these distributions is to use a Markov chain which mixes rapidly and has stationary distribution $\pi_i$. For a reversible, ergodic Markov chain, the time required to produce such a sample is controlled by the {\em relaxation time} $\tau := 1/(1 - |\lambda_1|)$ of the chain, where $\lambda_1$ is the second largest eigenvalue in absolute value~\cite{levin09}. In particular, sampling from a distribution close to $\pi_i$ in total variation distance requires $\Omega(\tau)$ steps of the chain.

It has been known for some time that quantum walks can sometimes mix quadratically faster~\cite{aharonov01}. One case where efficient mixing can be obtained is for sequences of Markov chains whose stationary distributions $\pi$ are close~\cite{wocjan08}. Further, for this special case one can approximately produce coherent ``quantum sample'' states $\ket{\pi} = \sum_{x \in \Omega} \sqrt{\pi(x)} \ket{x}$ efficiently. Here we can show (Section \ref{sec:approxsamp}) that the Chebyshev cooling schedule condition implies that each distribution in the sequence $\pi_1,\dots,\pi_{\ell-1}$ is close enough to its predecessor that we can use techniques of~\cite{wocjan08} to approximately produce any state $\ket{\pi_i}$ using $\widetilde{O}(\ell \sqrt{\tau})$ quantum walk steps each. Using similar ideas we can approximately reflect about $\ket{\pi_i}$ using only $\widetilde{O}(\sqrt{\tau})$ quantum walk steps.

Approximating $\E[Y_i]$ up to relative error $O(\epsilon / \ell)$ using our algorithm requires one quantum sample approximating $\ket{\pi_i}$, and $\widetilde{O}(\ell/\epsilon)$ approximate reflections about $\ket{\pi_i}$. Therefore, the total number of quantum walk steps required for each $i$ is $\widetilde{O}(\ell \sqrt{\tau} / \epsilon)$. Summing over $i$, we get a quantum algorithm for approximating an arbitrary partition function up to relative error $\epsilon$ using $\widetilde{O}((\log A) \sqrt{\tau} / \epsilon)$ quantum walk steps. The fastest known classical algorithm~\cite{stefankovic09} exhibits quadratically worse dependence on both $\tau$ and $\epsilon$.

In the above discussion, we have neglected the complexity of computing the Chebyshev cooling schedule itself. An efficient classical algorithm for this task is known~\cite{stefankovic09}, which runs in time $\widetilde{O}((\log A) \tau)$. Adding the complexity of this part, we finish with an overall complexity of $\widetilde{O}((\log A) \sqrt{\tau} (\sqrt{\tau} + 1/ \epsilon))$. We leave the interesting question open of whether there exists a more efficient quantum algorithm for finding a Chebyshev cooling schedule.


\subsection{Applications}

We now sketch several representative settings (for details, see Section~\ref{sec:pfproblems}) in which our algorithm for approximating partition functions gives a quantum speedup.

\begin{itemize}
\item The {\bf ferromagnetic Ising model} above the critical temperature. This well-studied statistical physics model is defined in terms of a graph $G = (V,E)$ by the Hamiltonian $H(z) = -\sum_{(u,v) \in E} z_u z_v$, where $|V| = n$ and $z \in \{\pm 1\}^n$. The Markov chain known as the Glauber dynamics is known to mix rapidly above a certain critical temperature and to have as its stationary distribution the Gibbs distribution. For example, for any graph with maximum degree $O(1)$, the mixing time of the Glauber dynamics for sufficiently low inverse temperature $\beta$ is $O(n \log n)$~\cite{mossel13}. In this case, as $A = 2^n$, the quantum algorithm approximates $Z(\beta)$ to within relative error $\epsilon$ in $\widetilde{O}(n^{3/2}/ \epsilon + n^2)$ steps. The corresponding classical algorithm~\cite{stefankovic09} uses $\widetilde{O}(n^2 / \epsilon^2)$ steps.

\item {\bf Counting colourings.} Here we are given a graph $G$ with $n$ vertices and maximum degree $d$. We seek to approximately count the number of valid $k$-colourings of $G$, where a colouring of the vertices is valid if all pairs of neighbouring vertices are assigned different colours. In the case where $k>2d$, the use of a rapidly mixing Markov chain gives a quantum algorithm approximating the number of colourings of $G$ up to relative error $\epsilon$ in time $\widetilde{O}(n^{3/2}/ \epsilon + n^2)$, as compared with the classical $\widetilde{O}(n^2 / \epsilon^2)$~\cite{stefankovic09}.

\item {\bf Counting matchings.} A matching in a graph $G$ is a subset $M$ of the edges of $G$ such that no pair of edges in $M$ shares a vertex. In statistical physics, matchings are studied under the name of monomer-dimer coverings~\cite{heilmann72}. Our algorithm can approximately count the number of matchings on a graph with $n$ vertices and $m$ edges in $\widetilde{O}(n^{3/2} m^{1/2}/\epsilon +n^2m )$ steps, as compared with the classical $\widetilde{O}(n^2 m / \epsilon^2)$~\cite{stefankovic09}.
\end{itemize}

Finally, as another example of how our algorithm can be applied, we improve the accuracy of an existing quantum algorithm for estimating the total variation distance between probability distributions. In this setting, we are given the ability to sample from probability distributions $p$ and $q$ on $n$ elements, and would like to estimate the distance between them up to additive error $\epsilon$. A quantum algorithm of Bravyi, Harrow and Hassidim solves this problem using $O(\sqrt{n}/\epsilon^8)$ samples~\cite{bravyi11a}, while no classical algorithm can achieve sublinear dependence on $n$~\cite{valiant11}.

Quantum mean estimation can significantly improve the dependence of this quantum algorithm on $\epsilon$. The total variation distance between $p$ and $q$ can be described as the expected value of the random variable $R(x) = \frac{ | p(x) - q(x) |}{p(x) + q(x)}$, where $x$ is drawn from the distribution $r = (p+q)/2$~\cite{bravyi11a}. For each $x$, $R(x)$ can be computed up to accuracy $\epsilon$ using $\widetilde{O}(\sqrt{n}/\epsilon^{3/2})$ iterations of amplitude estimation. Wrapping this within $O(1/\epsilon)$ iterations of the mean-estimation algorithm, we obtain an overall algorithm running in time $\widetilde{O}(\sqrt{n}/\epsilon^{5/2})$. See Section~\ref{sec:tvd} for details.





\section{Algorithms}
\label{sec:algs}

We now give technical details, parameter values and proofs for the various algorithms described informally in Section \ref{sec:approx}. Recall that, for any randomised or quantum algorithm $\mathcal{A}$, we let $v(\mathcal{A})$ be the random variable corresponding to the value computed by $\mathcal{A}$. We assume that $\mathcal{A}$ takes no input directly, but may have access to input (e.g.\ via queries to some black box or ``oracle'') during its execution. We further assume throughout that $\mathcal{A}$ is a quantum algorithm of the following form: apply some unitary operator to the initial state $\ket{0^n}$; measure $k \le n$ qubits of the resulting state in the computational basis, obtaining outcome $x \in \{0,1\}^k$; output $\phi(x)$ for some easily computable function $\phi:\{0,1\}^k \rightarrow \R$. We finally assume that we have access to the inverse of the unitary part of the algorithm, which we write as $\mathcal{A}^{-1}$.

\begin{lem}[Powering lemma~\cite{jerrum86}]
\label{lem:powering}
Let $\mathcal{A}$ be a (classical or quantum) algorithm which aims to estimate some quantity $\mu$, and whose output $\widetilde{\mu}$ satisfies $|\mu - \widetilde{\mu}| \le \epsilon$ except with probability $\gamma$, for some fixed $\gamma < 1/2$. Then, for any $\delta > 0$, it suffices to repeat $\mathcal{A}$ $O(\log 1/\delta)$ times and take the median to obtain an estimate which is accurate to within $\epsilon$ with probability at least $1-\delta$.
\end{lem}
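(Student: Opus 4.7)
The plan is to run the algorithm $\mathcal{A}$ independently $N = C\log(1/\delta)$ times for a suitable constant $C$ (depending on $\gamma$), obtaining estimates $\widetilde{\mu}_1,\dots,\widetilde{\mu}_N$, and to argue that their median $M$ satisfies $|M-\mu|\le\epsilon$ except with probability at most $\delta$. The key observation is that if $M$ is a ``bad'' estimate (i.e.\ $|M-\mu|>\epsilon$), then by the definition of the median at least half of the individual estimates must also be bad. So it suffices to bound the probability that at least $N/2$ of the runs fail.

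First I would define indicator random variables $X_i = \mathbf{1}[|\widetilde{\mu}_i-\mu|>\epsilon]$. By the hypothesis and independence of the runs, the $X_i$ are i.i.d.\ Bernoulli with parameter $p \le \gamma < 1/2$. Let $S = \sum_{i=1}^N X_i$, so $\E[S] \le \gamma N$. The event ``the median is bad'' is contained in $\{S \ge N/2\}$, so we need $\Pr[S \ge N/2] \le \delta$.

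The heart of the proof is a Chernoff bound applied to $S$. Since the gap $\eta := 1/2-\gamma > 0$ is a fixed positive constant, a standard multiplicative or additive Chernoff inequality gives $\Pr[S \ge N/2] \le \exp(-2\eta^2 N)$ (using, e.g., Hoeffding's inequality for bounded variables). Taking $N = \lceil \log(1/\delta)/(2\eta^2) \rceil = O(\log 1/\delta)$ makes the right-hand side at most $\delta$, which yields the claim.

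The only mild subtlety to watch is the $\gamma<1/2$ assumption: the constant in the $O(\log 1/\delta)$ depends on $1/(1-2\gamma)^2$, so the argument degrades as $\gamma\to 1/2$ but remains valid for any fixed $\gamma$ bounded away from $1/2$. This is really the ``main obstacle'' only in a cosmetic sense; there is no genuine technical difficulty, since the lemma is just the observation that a fair majority vote amplifies any algorithm whose individual failure probability is bounded below $1/2$.
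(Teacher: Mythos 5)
The paper does not give its own proof of this lemma; it is stated with a citation to Jerrum, Valiant and Vazirani and used as a black box. Your argument is the standard and correct one: the median is bad only if at least half the runs are bad, the number of bad runs is a sum of i.i.d.\ Bernoulli$(p)$ variables with $p \le \gamma < 1/2$, and a Hoeffding/Chernoff bound with gap $\eta = 1/2 - \gamma$ gives failure probability $e^{-2\eta^2 N}$, so $N = O(\log(1/\delta)/\eta^2) = O(\log 1/\delta)$ suffices. The one point worth spelling out in a full write-up is the median-to-majority reduction itself: if $M > \mu+\epsilon$ then at least $\lceil N/2 \rceil$ of the $\widetilde{\mu}_i$ are $\ge M > \mu+\epsilon$, and symmetrically if $M < \mu-\epsilon$; this is exactly the containment of the event $\{|M-\mu|>\epsilon\}$ in $\{S \ge N/2\}$ that you invoke. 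Your remark about the constant degrading as $\gamma \to 1/2$ is accurate and is precisely why the lemma fixes $\gamma$ bounded away from $1/2$.
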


We will also need the following fundamental result from~\cite{brassard02}:

%
%

\begin{thm}[Amplitude estimation~\cite{brassard02}]
\label{thm:ampest}
There is a quantum algorithm called {\bf amplitude estimation} which takes as input one copy of a quantum state $\ket{\psi}$, a unitary transformation $U = 2\proj{\psi} - I$, a unitary transformation $V = I - 2 P$ for some projector $P$, and an integer $t$. The algorithm outputs $\widetilde{a}$, an estimate of $a = \bracket{\psi}{P}{\psi}$, such that
\[ |\widetilde{a}-a| \le 2\pi \frac{\sqrt{a(1-a)}}{t} + \frac{\pi^2}{t^2} \]
with probability at least $8/\pi^2$, using $U$ and $V$ $t$ times each.
\end{thm}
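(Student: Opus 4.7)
The plan is to reduce amplitude estimation to quantum phase estimation applied to a Grover-like operator. First I would set up the geometry: decompose $\ket{\psi} = \sqrt{a}\,\ket{\psi_1} + \sqrt{1-a}\,\ket{\psi_0}$, where $\ket{\psi_1} = P\ket{\psi}/\sqrt{a}$ lies in the range of $P$ and $\ket{\psi_0} = (I-P)\ket{\psi}/\sqrt{1-a}$ lies in its kernel. Writing $a = \sin^2\theta$ with $\theta \in [0,\pi/2]$, the two-dimensional subspace $H_\psi = \mathrm{span}\{\ket{\psi_0},\ket{\psi_1}\}$ is invariant under both $U = 2\proj{\psi} - I$ (reflection about $\ket{\psi}$) and $V = I - 2P$ (reflection about $\ket{\psi_0}$).

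Next I would analyse the operator $Q = -UV$. A direct matrix calculation in the basis $\{\ket{\psi_0}, \ket{\psi_1}\}$ shows that on $H_\psi$ the product $Q$ is a rotation by angle $2\theta$, so its eigenvalues (restricted to $H_\psi$) are $e^{\pm 2i\theta}$, with eigenvectors $\ket{\psi_\pm} = (\ket{\psi_1} \pm i \ket{\psi_0})/\sqrt{2}$. Crucially, $\ket{\psi} = -\tfrac{i}{\sqrt{2}}(e^{i\theta}\ket{\psi_+} - e^{-i\theta}\ket{\psi_-})$ is supported entirely on these two eigenvectors. Each application of $Q$ uses $U$ and $V$ once, and the controlled-$Q$ operator required by phase estimation can be implemented with the same query count up to standard ancilla tricks.

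Now I would invoke quantum phase estimation with $t$ queries on the state $\ket{\psi}$: running the standard QFT-based circuit on $\ket{\psi}$ yields a measurement outcome $y \in \{0,\dots,t-1\}$ whose associated estimate $\widetilde{\theta} = \pi y/t$ satisfies $|\widetilde{\theta} - \theta| \le \pi/t$ with probability at least $8/\pi^2$; this is the well-known success probability of the phase-estimation rounding step, and it is symmetric over the two eigencomponents. The algorithm then outputs $\widetilde{a} := \sin^2(\widetilde{\theta})$.

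The final step is to convert the bound on $|\widetilde{\theta} - \theta|$ into a bound on $|\widetilde{a} - a|$. Writing $\Delta = \widetilde{\theta} - \theta$, one has
\[ \widetilde{a} - a = \sin^2(\theta+\Delta) - \sin^2\theta = 2\sin\theta\cos\theta\,\sin\Delta\,\cos\Delta - \sin^2\Delta\cdot(1 - 2\sin^2\theta), \]
so $|\widetilde{a}-a| \le 2\sqrt{a(1-a)}\,|\Delta| + |\Delta|^2$. Plugging in $|\Delta| \le \pi/t$ yields exactly $|\widetilde{a}-a| \le 2\pi\sqrt{a(1-a)}/t + \pi^2/t^2$, matching the stated bound. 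The main subtlety, and the step I would be most careful about, is verifying the $8/\pi^2$ success probability of phase estimation when the input state is a \emph{superposition} of two phase eigenstates rather than a single one: one must check that the rounding-error guarantee holds for each branch separately and that the two branches do not interfere destructively in the relevant measurement outcomes (because the two eigenphases $\pm 2\theta$ are symmetric about $0$, both yield the same estimate of $a$, so the analysis does go through).
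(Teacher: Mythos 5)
Theorem~\ref{thm:ampest} is stated and used as a black box, cited directly from Brassard, H\o{}yer, Mosca and Tapp~\cite{brassard02}; the paper does not reprove it, so there is no in-paper proof to compare your argument against. Your reconstruction of the argument from that source is essentially sound, but there is a sign slip that propagates slightly. With the paper's convention $U = 2\proj{\psi}-I$ (so $U$ fixes $\ket{\psi}$ and negates its orthogonal complement), the matrix computation in the basis $\{\ket{\psi_0},\ket{\psi_1}\}$ gives that $UV$, \emph{not} $-UV$, equals the rotation $R(2\theta)$ with eigenvalues $e^{\pm 2i\theta}$; the extra minus sign you inserted would yield eigenvalues $-e^{\pm 2i\theta}$ instead. (In~\cite{brassard02} the Grover iterate is $Q=-\mathcal{A}S_0\mathcal{A}^{-1}S_\chi$, but $\mathcal{A}S_0\mathcal{A}^{-1} = I - 2\proj{\psi} = -U$, so their leading minus sign cancels against yours.) There is a second, inconsequential, sign error in the expansion $\sin^2(\theta+\Delta)-\sin^2\theta = 2\sin\theta\cos\theta\sin\Delta\cos\Delta + \sin^2\Delta\,\cos 2\theta$: the $\sin^2\Delta$ term should carry $+\cos 2\theta=+(1-2\sin^2\theta)$, not its negative; the inequality $|\widetilde{a}-a|\le 2\sqrt{a(1-a)}\,|\Delta|+|\Delta|^2$ you extract from it is still correct. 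Lastly, the ``interference'' worry at the end is resolved more simply than you suggest: because $\ket{\psi_+}$ and $\ket{\psi_-}$ are orthogonal, after phase estimation the phase register is in an honest classical mixture (not a coherent superposition) of the two single-eigenvector outcome distributions, so there is no interference to rule out. Both branches then give the same $\widetilde{a}$ because $\sin^2$ is invariant under $\theta\mapsto-\theta$ \emph{and} under $\theta\mapsto\pi-\theta$ --- the latter symmetry is the one actually needed, since phase estimation returns phases modulo $2\pi$ and the $\ket{\psi_-}$ branch produces an estimate near $\pi-\theta$ rather than near $-\theta$.
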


The success probability of $8/\pi^2$ can be improved to $1-\delta$ for any $\delta > 0$ using the powering lemma at the cost of an $O(\log 1/\delta)$ multiplicative factor.


\subsection{Estimating the mean with bounded output values}
\label{sec:bounded}

We first consider the problem of estimating $\E[v(\mathcal{A})]$ in the special case where $v(\mathcal{A})$ is bounded between 0 and 1. The algorithm for this case is effectively a combination of elegant ideas of Brassard et al.~\cite{brassard11} and Wocjan et al.~\cite{wocjan09}. The former described an algorithm for efficiently approximating the mean of an arbitrary function with respect to the uniform distribution; the latter described an algorithm for approximating the expected value of a particular observable, with respect to an arbitrary quantum state. The first quantum algorithm achieving optimal scaling for approximating the mean of a bounded function under the uniform distribution was due to Heinrich~\cite{heinrich01}.

\boxalgm{alg:meanbounded}{Approximating the mean output value of algorithms bounded between 0 and 1 (cf.~\cite{brassard11,heinrich01,wocjan09})}{
{\bf Input:} an algorithm $\mathcal{A}$ such that $0 \le v(\mathcal{A}) \le 1$, integer $t$, real $\delta > 0$.\\ Assume that $\mathcal{A}$ is a quantum algorithm which makes no measurements until the end of the algorithm; operates on initial input state $\ket{0^n}$; and its final measurement is a measurement of the last $k \le n$ of these qubits in the computational basis.
\begin{enumerate}
\item If necessary, modify $\mathcal{A}$ such that it makes no measurements until the end of the algorithm; operates on initial input state $\ket{0^n}$; and its final measurement is a measurement of the last $k \le n$ of these qubits in the computational basis.
\item Let $W$ be the unitary operator on $k+1$ qubits defined by
\[ W\ket{x}\ket{0} = \ket{x} \left(\sqrt{1-\phi(x)}\ket{0} + \sqrt{\phi(x)}\ket{1}\right), \]
where each computational basis state $x \in \{0,1\}^k$ is associated with a real number $\phi(x) \in [0,1]$ such that $\phi(x)$ is the value output by $\mathcal{A}$ when measurement outcome $x$ is received.
\item Repeat the following step $O(\log 1/\delta)$ times and output the median of the results:
\begin{enumerate}
\item Apply $t$ iterations of amplitude estimation, setting $\ket{\psi} = (I \otimes W)(\mathcal{A} \otimes I)\ket{0^{n+1}}$, $P = I \otimes \proj{1}$.
\end{enumerate}
\end{enumerate}
}

\begin{thm}
\label{thm:meanerr}
Let $\ket{\psi}$ be defined as in Algorithm \ref{alg:meanbounded} and set $U = 2\proj{\psi} - I$. Algorithm \ref{alg:meanbounded} uses $O(\log 1/\delta)$ copies of the state $\mathcal{A} \ket{0^n}$, uses $U$ $O(t \log 1/\delta)$ times, and outputs an estimate $\widetilde{\mu}$ such that
\[ | \widetilde{\mu} - \E[v(\mathcal{A})] | \le C \left( \frac{\sqrt{\E[v(\mathcal{A})]}}{t} + \frac{1}{t^2} \right) \]
with probability at least $1-\delta$, where $C$ is a universal constant. In particular, for any fixed $\delta > 0$ and any $\epsilon$ such that $0 \le \epsilon \le 1$, to produce an estimate $\widetilde{\mu}$ such that with probability at least $1-\delta$, $| \widetilde{\mu} - \E[v(\mathcal{A})] | \le \epsilon\,\E[v(\mathcal{A})]$ it suffices to take $t = O(1/(\epsilon \sqrt{\E[v(\mathcal{A})]}))$. To achieve $| \widetilde{\mu} - \E[v(\mathcal{A})] | \le \epsilon$ with probability at least $1-\delta$ it suffices to take $t = O(1/\epsilon)$.
\end{thm}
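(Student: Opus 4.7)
The plan is to reduce the theorem to a direct application of amplitude estimation (Theorem~\ref{thm:ampest}) and the powering lemma (Lemma~\ref{lem:powering}), with all nontrivial content sitting in a short amplitude calculation.

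First I would unpack the state $\ket{\psi}=(I\otimes W)(\mathcal{A}\otimes I)\ket{0^{n+1}}$. Writing the output of $\mathcal{A}$ as $\mathcal{A}\ket{0^n}=\sum_x \alpha_x\ket{\psi_x}\ket{x}$, where $x\in\{0,1\}^k$ denotes the measured register and $\ket{\psi_x}$ lives on the remaining $n-k$ qubits, applying $I\otimes W$ yields
\[
\ket{\psi}=\sum_x \alpha_x\ket{\psi_x}\ket{x}\bigl(\sqrt{1-\phi(x)}\ket{0}+\sqrt{\phi(x)}\ket{1}\bigr).
\]
Taking $P=I\otimes\proj{1}$, a direct computation gives $a:=\bracket{\psi}{P}{\psi}=\sum_x|\alpha_x|^2\phi(x)=\E[v(\mathcal{A})]=:\mu$. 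This is the key identification: the amplitude being estimated is exactly the quantity we care about, and it relies on nothing beyond the defining action of $W$.

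Next I would invoke Theorem~\ref{thm:ampest} with this $\ket{\psi}$, $P$, and the prescribed $t$. A single run returns $\widetilde{a}$ with
\[
|\widetilde{a}-\mu|\le 2\pi\frac{\sqrt{\mu(1-\mu)}}{t}+\frac{\pi^2}{t^2}\le 2\pi\frac{\sqrt{\mu}}{t}+\frac{\pi^2}{t^2}
\]
with probability at least $8/\pi^2$, using one copy of $\ket{\psi}$ and $t$ applications each of $U$ and $V$. Choosing $C=\max(2\pi,\pi^2)$ gives the claimed bound for a single amplitude-estimation run. Then I would apply Lemma~\ref{lem:powering}: repeating the procedure $O(\log 1/\delta)$ times and outputting the median boosts the success probability from the constant $8/\pi^2>1/2$ to $1-\delta$. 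Resource counting is then immediate: one state preparation (hence one copy of $\mathcal{A}\ket{0^n}$) and $t$ uses of $U$ per repetition, for totals of $O(\log 1/\delta)$ and $O(t\log 1/\delta)$ respectively.

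Finally I would derive the two parameter regimes. For additive error, since $\mu\in[0,1]$ we have $\sqrt{\mu}\le 1$, so requiring $C(1/t+1/t^2)\le\epsilon$ is achieved by any $t=O(1/\epsilon)$ (the $1/t^2$ term is dominated once $t\ge 1$). For relative error $|\widetilde{\mu}-\mu|\le\epsilon\mu$, I would choose $t=\lceil 2C/(\epsilon\sqrt{\mu})\rceil$: the first term then contributes at most $\tfrac{1}{2}\epsilon\mu$, and the second contributes $C/t^2\le\tfrac{\epsilon^2\mu}{4C}\le\tfrac{1}{2}\epsilon\mu$ whenever $\epsilon\le 2C$, which we may assume (larger $\epsilon$ are trivial). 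The only mild subtlety, and the step I would check most carefully, is this balancing of the two error terms in the relative-error regime to confirm that the single choice $t=O(1/(\epsilon\sqrt{\mu}))$ handles both; everything else is bookkeeping on top of Theorem~\ref{thm:ampest} and Lemma~\ref{lem:powering}.
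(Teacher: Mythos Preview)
Your proposal is correct and follows essentially the same approach as the paper: identify $\bracket{\psi}{P}{\psi}=\E[v(\mathcal{A})]$ via the explicit form of $\ket{\psi}$, apply Theorem~\ref{thm:ampest}, then boost via Lemma~\ref{lem:powering}. You in fact give more detail than the paper does on verifying the two ``in particular'' parameter regimes, which the paper leaves implicit.
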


\begin{proof}
The complexity claim follows immediately from Theorem \ref{thm:ampest}. Also observe that $W$ can be implemented efficiently, as it is a controlled rotation of one qubit dependent on the value of $\phi(x)$~\cite{wocjan09}. It remains to show the accuracy claim. The final state of $\mathcal{A}$, just before its last measurement, can be written as
\[ \ket{\psi'} = \mathcal{A}\ket{0^n} = \sum_x \alpha_x \ket{\psi_x} \ket{x} \]
for some normalised states $\ket{\psi_x}$. If we then attach an ancilla qubit and apply $W$, we obtain
\[ \ket{\psi} = (I \otimes W) (\mathcal{A} \otimes I) \ket{0^n}\ket{0} = \sum_x \alpha_x \ket{\psi_x} \ket{x} \left(\sqrt{1-\phi(x)}\ket{0} + \sqrt{\phi(x)}\ket{1}\right). \]
We have
\[ \bracket{\psi}{P}{\psi} = \sum_x |\alpha_x|^2 \phi(x) = \E[v(\mathcal{A})]. \]
Therefore, when we apply amplitude estimation, by Theorem \ref{thm:ampest} we obtain an estimate $\widetilde{\mu}$ of $\mu = \E[v(\mathcal{A})]$ such that 
\[ |\widetilde{\mu}-\mu| \le 2\pi \frac{\sqrt{\mu(1-\mu)}}{t} + \frac{\pi^2}{t^2} \]
with probability at least $8/\pi^2$. The powering lemma (Lemma \ref{lem:powering}) implies that the median of $O(\log 1/\delta)$ repetitions will lie within this accuracy bound with probability at least $1-\delta$.
\end{proof}

Observe that $U = 2\proj{\psi} - I$ can be implemented with one use each of $\mathcal{A}$ and $\mathcal{A}^{-1}$, and $V = I-2P$ is easy to implement.


It seems likely that the median-finding algorithm of Nayak and Wu~\cite{nayak99} could also be generalised in a similar way, to efficiently compute the median of the output values of any quantum algorithm. As we will not need this result here we do not pursue this further.


\subsection{Estimating the mean with bounded $\ell_2$ norm}
\label{sec:boundedl2}

We now use Algorithm \ref{alg:meanbounded} to give an efficient quantum algorithm for approximating the mean output value of a quantum algorithm whose output has bounded $\ell_2$ norm. In what follows, for any algorithm $\mathcal{A}$, let $\mathcal{A}_{< x}$, $\mathcal{A}_{x,y}$, $\mathcal{A}_{\ge y}$, be the algorithms defined by executing $\mathcal{A}$ to produce a value $v(\mathcal{A})$ and:
\begin{itemize}
\item $\mathcal{A}_{<x}$: If $v(\mathcal{A}) < x$, output $v(\mathcal{A})$, otherwise output 0;
\item $\mathcal{A}_{x,y}$: If $x \le v(\mathcal{A}) < y$, output $v(\mathcal{A})$, otherwise output 0;
\item $\mathcal{A}_{\ge y}$: If $y \le v(\mathcal{A})$, output $v(\mathcal{A})$, otherwise output 0.
\end{itemize}
In addition, for any algorithm $\mathcal{A}$ and any function $f:\R \rightarrow \R$, let $f(\mathcal{A})$ be the algorithm produced by evaluating $v(\mathcal{A})$ and computing $f(v(\mathcal{A}))$. Note that Algorithm \ref{alg:meanbounded} can easily be modified to compute $\E[f(v(\mathcal{A}))]$ rather than $\E[v(\mathcal{A})]$, for any function $f:\R \rightarrow [0,1]$, by modifying the operation $W$.

Our algorithm and correctness proof are a generalisation of a result of Heinrich~\cite{heinrich01} for computing the mean with respect to the uniform distribution of functions with bounded $L^2$ norm, and are based on the same ideas. Write $\left\|v(\mathcal{A})\right\|_2 := \sqrt{\E[v(\mathcal{A})^2]}$.

\boxalgm{alg:meansub}{Approximating the mean of positive functions with bounded $\ell_2$ norm}{
{\bf Input:} an algorithm $\mathcal{A}$ such that $v(\mathcal{A}) \ge 0$, and an accuracy $\epsilon < 1/2$.
\begin{enumerate}
\item Set $k = \lceil\log_2 1/\epsilon \rceil$, $t_0 = \left\lceil\frac{D\sqrt{\log_2 1/\epsilon}}{\epsilon}\right\rceil$, where $D$ is a universal constant to be chosen later.
\item Use Algorithm \ref{alg:meanbounded} with $t= t_0$, $\delta = 1/10$ to estimate $\E[v(\mathcal{A}_{0,1})]$. Let the estimate be $\widetilde{\mu}_0$.
\item For $\ell=1,\dots, k$:
\begin{enumerate}
\item Use Algorithm \ref{alg:meanbounded} with $t= t_0$, $\delta = 1/(10k)$ to estimate $\E[v(\mathcal{A}_{2^{\ell-1},2^{\ell}})/2^{\ell}]$. Let the estimate be $\widetilde{\mu}_\ell$.
\end{enumerate}
\item Output $\widetilde{\mu} = \widetilde{\mu}_0 + \sum_{\ell=1}^k 2^{\ell} \widetilde{\mu}_\ell$.
\end{enumerate}
}

\begin{lem}
\label{lem:boundedl2}
Let $\ket{\psi} = \mathcal{A} \ket{0^n}$, $U = 2\proj{\psi} - I$. Algorithm \ref{alg:meansub} uses $O(\log(1/\epsilon)\log \log(1/\epsilon))$ copies of $\ket{\psi}$, uses $U$ $O((1/\epsilon)\log^{3/2}(1/\epsilon)\log \log(1/\epsilon))$ times, and estimates $\E[v(\mathcal{A})]$ up to additive error $\epsilon(\left\|v(\mathcal{A})\right\|_2+1)^2$ with probability at least $4/5$.
\end{lem}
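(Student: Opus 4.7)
\textbf{Proof proposal for Lemma \ref{lem:boundedl2}.} The plan is to decompose the expectation according to the dyadic intervals used by the algorithm, bound the truncation tail via a second-moment inequality, and then control the accumulated estimation error with a careful Cauchy--Schwarz argument. Set $L = \|v(\mathcal{A})\|_2$ and write $p_\ell = \Pr[2^{\ell-1}\le v(\mathcal{A}) < 2^\ell]$, $\mu_\ell = \E[v(\mathcal{A}_{2^{\ell-1},2^\ell})]$, $\mu'_\ell = \mu_\ell/2^\ell$, and $M = 2^k \ge 1/\epsilon$. We have the exact decomposition
\[
\E[v(\mathcal{A})] \;=\; \E[v(\mathcal{A}_{0,1})] \;+\; \sum_{\ell=1}^{k} \mu_\ell \;+\; \E[v(\mathcal{A}_{\ge M})].
\]
First I would control the tail: by Cauchy--Schwarz and Markov applied to $v(\mathcal{A})^2$,
\[
\E[v(\mathcal{A})\mathbf{1}_{v(\mathcal{A})\ge M}]\;\le\;\sqrt{\E[v(\mathcal{A})^2]\,\Pr[v(\mathcal{A})\ge M]}\;\le\;\frac{L^2}{M}\;\le\;\epsilon L^2.
\]

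Next I would quantify the error from the $k+1$ invocations of Algorithm~\ref{alg:meanbounded}. Since $v(\mathcal{A}_{2^{\ell-1},2^\ell})/2^\ell \in [0,1]$, Theorem~\ref{thm:meanerr} applied with $t=t_0$ and $\delta=1/(10k)$ yields
\[
\bigl|\widetilde{\mu}_\ell-\mu'_\ell\bigr|\;\le\; C\!\left(\frac{\sqrt{\mu'_\ell}}{t_0}+\frac{1}{t_0^2}\right)
\]
with probability $\ge 1-1/(10k)$, and analogously for $\widetilde{\mu}_0$ with $\delta=1/10$. Rescaling, the contribution to the final error is $2^\ell \cdot C(\sqrt{\mu'_\ell}/t_0 + 1/t_0^2)$, and summing the $1/t_0^2$ terms gives $\sum_{\ell\le k}2^\ell/t_0^2 \le 2M/t_0^2 = O(\epsilon/(D^2\log(1/\epsilon)))$, which is negligible.

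The main obstacle is controlling $\sum_{\ell=1}^k 2^\ell \sqrt{\mu'_\ell}/t_0$: the naive bound $\mu'_\ell \le 1$ loses a factor $2^k\approx 1/\epsilon$. The key observation is a second-moment inequality: using $\mu'_\ell \le p_\ell$ and $\E[v(\mathcal{A}_{2^{\ell-1},2^\ell})^2]\ge 2^{2(\ell-1)}p_\ell$ together with the disjointness of the intervals,
\[
\sum_{\ell=1}^k 2^{2\ell}\mu'_\ell \;\le\; \sum_{\ell=1}^k 2^{2\ell}p_\ell \;\le\; 4\sum_{\ell=1}^k \E[v(\mathcal{A}_{2^{\ell-1},2^\ell})^2]\;\le\;4L^2.
\]
Cauchy--Schwarz on the sequences $(1)_{\ell}$ and $(2^\ell\sqrt{\mu'_\ell})_\ell$ then gives
\[
\sum_{\ell=1}^k 2^\ell\sqrt{\mu'_\ell}\;\le\;\sqrt{k}\,\sqrt{\sum_{\ell=1}^k 2^{2\ell}\mu'_\ell}\;\le\;2\sqrt{k}\,L,
\]
so with $t_0=\lceil D\sqrt{k}/\epsilon\rceil$ and $D$ chosen larger than $8C$, this total contribution is at most $\epsilon L/2$. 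The zeroth term contributes $C(\sqrt{\mu_0}/t_0 + 1/t_0^2)=O(\epsilon)$ since $\mu_0\le 1$.

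Finally I would collect everything. A union bound over the $k+1$ invocations gives overall success probability at least $1-1/10-k\cdot 1/(10k)=4/5$. Conditioned on success, the total error is bounded by the tail plus the rescaled estimation errors, $\epsilon L^2 + O(\epsilon L) + O(\epsilon)\le \epsilon(L+1)^2$ after adjusting the constant $D$. For complexity, each call to Algorithm~\ref{alg:meanbounded} uses $O(\log(1/\delta))$ copies of $\ket{\psi}$ and $O(t_0\log(1/\delta))$ invocations of $U$; summing over the $O(\log(1/\epsilon))$ levels with $\delta=\Theta(1/\log(1/\epsilon))$ yields the stated $O(\log(1/\epsilon)\log\log(1/\epsilon))$ and $O((1/\epsilon)\log^{3/2}(1/\epsilon)\log\log(1/\epsilon))$ bounds.
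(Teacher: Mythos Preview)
Your proof is correct and follows essentially the same approach as the paper: the same dyadic decomposition and tail bound via the second moment, the same Cauchy--Schwarz step to control $\sum_\ell 2^\ell\sqrt{\mu'_\ell}$ using $\sum_\ell 2^{2\ell}\mu'_\ell \le O(L^2)$, and the same union bound for the $4/5$ success probability. The only cosmetic differences are that the paper bounds the tail directly via $\sum_{x\ge 2^k} p(x)x \le 2^{-k}\sum_x p(x)x^2$ rather than Cauchy--Schwarz plus Markov, and writes the Cauchy--Schwarz step in terms of $\E[v(\mathcal{A}_{2^{\ell-1},2^\ell})]$ rather than $p_\ell$.
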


\begin{proof}
We first show the resource bounds. Algorithm \ref{alg:meanbounded} is run $\Theta(\log 1/\epsilon)$ times, each time with parameter $\delta = \Omega(1/(\log 1/\epsilon))$. By Theorem \ref{thm:meanerr}, each use of Algorithm \ref{alg:meanbounded} consumes $O(\log \log 1/\epsilon)$ copies of $\ket{\psi}$ and uses $U$ $O((1/\epsilon)\sqrt{\log(1/\epsilon)} \log \log(1/\epsilon) )$ times. The total number of copies of $\ket{\psi}$ used is $O(\log(1/\epsilon)\log \log(1/\epsilon))$, and the total number of uses of $U$ is $O((1/\epsilon)\log^{3/2}(1/\epsilon)\log \log(1/\epsilon))$.

All of the uses of Algorithm \ref{alg:meanbounded} succeed, except with probability at most $1/5$ in total. To estimate the total error in the case where they all succeed, we write
\[ \E[v(\mathcal{A})] = \E[v(\mathcal{A}_{0,1})] + \sum_{\ell=1}^k 2^\ell \E[v(\mathcal{A}_{2^{\ell-1},2^{\ell}})/2^{\ell}] + \E[v(\mathcal{A}_{\ge2^k})] \]
and use the triangle inequality term by term to obtain
\[ |\widetilde{\mu} - \E[v(\mathcal{A})]| \le |\widetilde{\mu}_0 - \E[v(\mathcal{A}_{0,1})]| + \sum_{\ell=1}^k 2^{\ell} | \widetilde{\mu}_\ell - \E[v(\mathcal{A}_{2^{\ell-1},2^{\ell}})/2^\ell]| + \E[v(\mathcal{A}_{\ge2^k})]. \]
Let $p(x)$ denote the probability that $\mathcal{A}$ outputs $x$. We have
\[ \E[v(\mathcal{A}_{\ge 2^k})] = \sum_{x \ge 2^k}  p(x) x \le \frac{1}{2^k} \sum_x p(x) x^2 = \frac{\left\|v(\mathcal{A})\right\|_2^2}{2^k}. \]
By Theorem \ref{thm:meanerr},
\[ | \widetilde{\mu}_0 - \E[v(\mathcal{A}_{0,1})]| \le C \left( \frac{\sqrt{\E[v(\mathcal{A}_{0,1})]}}{t_0} + \frac{1}{t_0^2} \right) \]
and similarly
\[ | \widetilde{\mu}_\ell - \E[v(\mathcal{A}_{2^{\ell-1},2^{\ell}})/2^\ell]| \le C \left( \frac{\sqrt{\E[v(\mathcal{A}_{2^{\ell-1},2^{\ell}})]}}{t_0\,2^{\ell/2}} + \frac{1}{t_0^2} \right). \]
So the total error is at most
\[ C \left(\frac{\sqrt{\E[v(\mathcal{A}_{0,1})]}}{t_0} + \frac{1}{t_0^2}  + \sum_{\ell=1}^k 2^\ell \left( \frac{\sqrt{\E[v(\mathcal{A}_{2^{\ell-1},2^{\ell}})]}}{t_0\,2^{\ell/2}} + \frac{1}{t_0^2} \right) \right) + \frac{\left\|v(\mathcal{A})\right\|_2^2}{2^k}. \]
We apply Cauchy-Schwarz to the first part of each term in the sum:
\[ \sum_{\ell=1}^k 2^{\ell/2} \sqrt{\E[v(\mathcal{A}_{2^{\ell-1},2^{\ell}})]} \le \sqrt{k} \left(\sum_{\ell=1}^k 2^\ell \E[v(\mathcal{A}_{2^{\ell-1},2^{\ell}})] \right)^{1/2} \le \sqrt{2k} \left\|v(\mathcal{A})\right\|_2, \]
where the second inequality follows from
\[ \E[v(\mathcal{A}_{2^{\ell-1},2^\ell})] = \sum_{2^{\ell-1} \le x < 2^\ell}  p(x) x \le \frac{1}{2^{\ell-1}} \sum_{2^{\ell-1} \le x < 2^\ell}  p(x) x^2 = \frac{\|v(\mathcal{A}_{2^{\ell-1},2^{\ell}})\|_2^2}{2^{\ell-1}}. \]
Inserting this bound and using $\E[v(\mathcal{A}_{0,1})] \le 1$, we obtain
\[ |\widetilde{\mu} - \E[v(\mathcal{A})]| \le C \left(\frac{1}{t_0} + \frac{1}{t_0^2} + \frac{\sqrt{2k} \left\|v(\mathcal{A})\right\|_2}{t_0} + \frac{2^{k+1}}{t_0^2} \right) + \frac{\left\|v(\mathcal{A})\right\|_2^2}{2^k}. \]
Inserting the definitions of $t_0$ and $k$, we get an overall error bound
\begin{align*}
 & \!\!\!\!\!\!\!\! |\widetilde{\mu} - \E[v(\mathcal{A})]|\\
 &\le \frac{C}{D} \left(\frac{\epsilon}{\sqrt{\log_2 1/\epsilon}} + \frac{\epsilon^2}{D\log_2 1/\epsilon} + \sqrt{2}\epsilon\left\|v(\mathcal{A})\right\|_2 \left(1 + \frac{1}{\log_2 1/\epsilon} \right)^{1/2} + \frac{4\epsilon}{D \log_2 1/\epsilon} \right) + \epsilon \left\|v(\mathcal{A})\right\|_2^2\\
&\le \frac{C}{D} \left( \epsilon + \frac{\epsilon}{D} + 2\epsilon \left\|v(\mathcal{A})\right\|_2 + \frac{4\epsilon}{D} \right) + \epsilon \left\|v(\mathcal{A})\right\|_2^2\\
&= \epsilon\left( \frac{C}{D}\left(1+\frac{5}{D} + 2\left\|v(\mathcal{A})\right\|_2\right) + \left\|v(\mathcal{A})\right\|_2^2\right)
\end{align*}
using $0<\epsilon < 1/2$ in the second inequality. For a sufficiently large constant $D$, this is upper-bounded by $\epsilon(\left\|v(\mathcal{A})\right\|_2+1)^2$ as claimed.
\end{proof}

Observe that, if $\E[v(\mathcal{A})^2] = O(1)$, to achieve additive error $\epsilon$ the number of uses of $\mathcal{A}$ that we need is $O((1/\epsilon)\log^{3/2}(1/\epsilon)\log \log(1/\epsilon))$. By the powering lemma, we can repeat Algorithm \ref{alg:meansub} $O(\log 1/\delta)$ times and take the median to improve the probability of success to $1-\delta$ for any $\delta>0$.


\subsection{Estimating the mean with bounded variance}

We are now ready to formally state our algorithm for estimating the mean output value of an arbitrary algorithm with bounded variance. For clarity, some of the steps are reordered as compared with the informal description in Section \ref{sec:approx}. Recall that, in the classical setting, if we wish to estimate $\E[v(\mathcal{A})]$ up to additive error $\epsilon$ for an arbitrary algorithm $\mathcal{A}$ such that
\[ \Var(v(\mathcal{A})) := \E[(v(\mathcal{A}) - \E[v(\mathcal{A})])^2] \le \sigma^2, \]
we need to use $\mathcal{A}$ $\Omega(\sigma^2 / \epsilon^2)$ times~\cite{dagum00}.

\boxalgm{alg:meanvariance}{Approximating the mean with bounded variance}{
{\bf Input:} an algorithm $\mathcal{A}$ such that $\Var(v(\mathcal{A})) \le \sigma^2$ for some known $\sigma$, and an accuracy $\epsilon$ such that $\epsilon < 4 \sigma$.
\begin{enumerate}
\item Set $\mathcal{A}' = \mathcal{A} / \sigma$.
\item Run $\mathcal{A}'$ once and let $\widetilde{m}$ be the output.
\item Let $\mathcal{B}$ be the algorithm produced by executing $\mathcal{A}'$ and subtracting $\widetilde{m}$.
\item Apply Algorithm \ref{alg:meansub} to algorithms $-\mathcal{B}_{<0}/4$ and $\mathcal{B}_{\ge 0}/4$ with accuracy $\epsilon/(32\sigma)$ and failure probability $1/9$, to produce estimates $\widetilde{\mu}^-$, $\widetilde{\mu}^+$ of $\E[v(-\mathcal{B}_{<0})/4]$ and $\E[v(\mathcal{B}_{\ge 0})/4]$, respectively.
\item Set $\widetilde{\mu} = \widetilde{m} - 4\widetilde{\mu}^- + 4\widetilde{\mu}^+$.
\item Output $\sigma\, \widetilde{\mu}$.
\end{enumerate}
}

\begin{thm}
Let $\ket{\psi} = \mathcal{A} \ket{0^n}$, $U = 2 \proj{\psi} - I$. Algorithm \ref{alg:meanvariance} uses $O(\log(\sigma/\epsilon) \log \log (\sigma/\epsilon))$ copies of $\ket{\psi}$, uses $U$ $O((\sigma/ \epsilon) \log^{3/2} (\sigma/\epsilon) \log \log (\sigma/\epsilon))$ times, and estimates $\E[v(\mathcal{A})]$ up to additive error $\epsilon$ with success probability at least $2/3$.
\end{thm}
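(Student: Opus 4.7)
My plan is to reduce to the bounded-$\ell_2$-norm case handled by Lemma~\ref{lem:boundedl2}. After rescaling by $\sigma$, the algorithm $\mathcal{A}'$ has $\Var(v(\mathcal{A}')) \le 1$, and the single sample $\widetilde{m}$ drawn in Step~2 serves as a cheap proxy for $\mu' := \E[v(\mathcal{A}')] = \mu/\sigma$. Chebyshev's inequality applied to one draw of $v(\mathcal{A}')$ gives $\Pr[|\widetilde{m} - \mu'| \ge 3] \le 1/9$, so I would condition on the event $|\widetilde{m}-\mu'| \le 3$, which occurs with probability at least $8/9$.

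Under this conditioning, the shifted algorithm $\mathcal{B} = \mathcal{A}' - \widetilde{m}$ satisfies
\[ \E[v(\mathcal{B})^2] = \Var(v(\mathcal{A}')) + (\mu' - \widetilde{m})^2 \le 1 + 9 = 10, \]
so $\|v(\mathcal{B})/4\|_2 \le \sqrt{10}/4 < 1$. The two one-sided truncations $\mathcal{B}^+ := \mathcal{B}_{\ge 0}/4$ and $\mathcal{B}^- := -\mathcal{B}_{<0}/4$ are nonnegative, are supported on disjoint events (so each inherits the same $\ell_2$-norm bound), and therefore satisfy the precondition of Lemma~\ref{lem:boundedl2}. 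Invoking Algorithm~\ref{alg:meansub} with accuracy $\epsilon/(32\sigma)$ and failure probability $1/9$ (the latter attained from the $4/5$ guarantee of the lemma by a constant number of repetitions via the powering lemma), Lemma~\ref{lem:boundedl2} gives
\[ |\widetilde{\mu}^\pm - \E[v(\mathcal{B}^\pm)]| \le \frac{\epsilon}{32\sigma} \bigl(\|v(\mathcal{B}^\pm)\|_2+1\bigr)^2 \le \frac{\epsilon}{8\sigma}. \]

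To finish, I would use the identity $\mu' = \widetilde{m} + \E[v(\mathcal{B})] = \widetilde{m} + 4\E[v(\mathcal{B}^+)] - 4\E[v(\mathcal{B}^-)]$ together with the triangle inequality to obtain $|\widetilde{\mu} - \mu'| \le 4 \cdot \epsilon/(8\sigma) + 4 \cdot \epsilon/(8\sigma) = \epsilon/\sigma$; rescaling by $\sigma$ gives $|\sigma\widetilde{\mu} - \mu| \le \epsilon$. A union bound over the Chebyshev event and the two uses of Algorithm~\ref{alg:meansub} yields an overall success probability of at least $1 - 3/9 = 2/3$. The quantum resource bounds follow directly by substituting the accuracy $\epsilon/(32\sigma)$ into Lemma~\ref{lem:boundedl2}; the constant-factor powering-lemma boost and the single classical sample used for $\widetilde{m}$ are asymptotically negligible. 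I expect the only real subtlety to be bookkeeping the constants — making the Chebyshev deviation (the choice of $3$), the truncation prefactor ($1/4$), and the target accuracy $\epsilon/(32\sigma)$ all match so that the factor $(\|v\|_2+1)^2 \le 4$ coming from Lemma~\ref{lem:boundedl2} combines with the input precision to give exactly $\epsilon/(8\sigma)$ per branch — rather than any new technique.
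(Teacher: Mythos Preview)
Your proposal is correct and follows essentially the same approach as the paper's proof: Chebyshev to control the shift $\widetilde{m}$, bound $\|v(\mathcal{B})\|_2$, split into positive and negative parts, apply Lemma~\ref{lem:boundedl2} to each, and recombine with a union bound. The only cosmetic difference is that the paper bounds $\|v(\mathcal{B})\|_2 \le 4$ via the triangle inequality in $\ell_2$ (Minkowski), whereas you compute $\E[v(\mathcal{B})^2] = \Var(v(\mathcal{A}')) + (\mu'-\widetilde{m})^2 \le 10$ directly; both yield $\|v(\mathcal{B})/4\|_2 \le 1$ and the rest of the argument is identical.
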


\begin{proof}
First, observe that $\widetilde{m}$ is quite close to $\mu' := \E[v(\mathcal{A}')]$ with quite high probability. As $\Var(v(\mathcal{A}')) = \Var(v(\mathcal{A}))/\sigma^2 \le 1$, by Chebyshev's inequality we have
\[ \Pr[ |v(\mathcal{A}') - \mu'| \ge 3 ] \le \frac{1}{9}. \]
We therefore assume that $|\widetilde{m} - \mu'| \le 3$. In this case we have
\beas \|v(\mathcal{B})\|_2 &=& \E[v(\mathcal{B})^2]^{1/2} = \E[((v(\mathcal{A}') - \mu') + (\mu' - \widetilde{m}))^2]^{1/2}\\
&\le& \E[(v(\mathcal{A}') - \mu')^2]^{1/2} + \E[(\mu' - \widetilde{m})^2]^{1/2}\\
&\le& 4,
\eeas
where the first inequality is the triangle inequality. Thus $\|v(\mathcal{B})/4\|_2 \le 1$, which implies that $\|v(-\mathcal{B}_{<0})/4\|_2 \le 1$ and $\|v(\mathcal{B}_{\ge 0})/4\|_2 \le 1$.

The next step is to use Algorithm \ref{alg:meansub} to estimate $\E[v(-\mathcal{B}_{<0})/4]$ and $\E[v(\mathcal{B}_{\ge 0})/4]$ with accuracy $\epsilon/(32\sigma)$ and failure probability $1/9$. By Lemma \ref{lem:boundedl2}, if the algorithm succeeds in both cases the estimates are accurate up to $\epsilon / (8\sigma)$. We therefore obtain an approximation of each of $\E[v(-\mathcal{B}_{<0})]$ and $\E[v(\mathcal{B}_{\ge 0})]$ up to additive error $\epsilon/(2\sigma)$. As we have
\[ \E[v(\mathcal{A})] = \sigma\,\E[v(\mathcal{A}')] = \sigma(\widetilde{m} - \E[v(-\mathcal{B}_{<0})] + \E[v(\mathcal{B}_{\ge 0})]) \]
by linearity of expectation, using a union bound we have that $\sigma\,\widetilde{\mu}$ approximates $\E[v(\mathcal{A})]$ up to additive error $\epsilon$ with probability at least $2/3$.
\end{proof}


\subsection{Estimating the mean with bounded relative error}

It is often useful to obtain an estimate of the mean output value of an algorithm which is accurate up to small relative error, rather than the absolute error achieved by Algorithm \ref{alg:meanvariance}. Assume that we have the bound on the relative variance that $\Var(v(\mathcal{A}))/(\E[v(\mathcal{A})])^2 \le B$, where we normally think of $B$ as small, e.g.\ $B = O(1)$. Classically, it follows from Chebyshev's inequality that the simple classical algorithm described in the Introduction approximates $\E[v(\mathcal{A})]$ up to additive error $\epsilon\,\E[v(\mathcal{A})]$ with $O(B / \epsilon^2)$ uses of $\mathcal{A}$. In the quantum setting, we can improve the dependence on $\epsilon$ near-quadratically.

\boxalgm{alg:meanratio}{Approximating the mean with bounded relative error}{
{\bf Input:} An algorithm $\mathcal{A}$ such that $v(\mathcal{A}) \ge 0$ and $\Var(v(\mathcal{A}))/(\E[v(\mathcal{A})])^2 \le B$ for some $B \ge 1$, and an accuracy $\epsilon < 27B/4$.
\begin{enumerate}
\item Run $\mathcal{A}$ $k = \lceil 32 B \rceil$ times, receiving output values $v_1,\dots,v_k$, and set $\widetilde{m} = \frac{1}{k} \sum_{i=1}^k v_i$.
\item Apply Algorithm \ref{alg:meansub} to $\mathcal{A}/\widetilde{m}$ with accuracy $2 \epsilon / (3 (2 \sqrt{B} + 1)^2)$ and failure probability $1/8$. Let $\widetilde{\mu}$ be the output of the algorithm, multiplied by $\widetilde{m}$.
\item Output $\widetilde{\mu}$.
\end{enumerate}
}

\begin{thm}
\label{thm:meanratio}
Let $\ket{\psi} = \mathcal{A} \ket{0^n}$, $U = 2 \proj{\psi} - I$. Algorithm \ref{alg:meanratio} uses $O(B + \log(1/\epsilon) \log \log (1/\epsilon))$ copies of $\ket{\psi}$, uses $U$ $O((B/ \epsilon) \log^{3/2} (B/\epsilon) \log \log (B/\epsilon))$ times, and outputs an estimate $\widetilde{\mu}$ such that
\[ \Pr[|\widetilde{\mu} - \E[v(\mathcal{A})]| \ge \epsilon\,\E[v(\mathcal{A})]] \le 1/4. \]
\end{thm}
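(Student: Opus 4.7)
The plan is to use the classical samples to obtain a rough multiplicative estimate $\widetilde{m}$ of $\mu := \E[v(\mathcal{A})]$, then invoke Lemma \ref{lem:boundedl2} on the rescaled algorithm $\mathcal{A}' := \mathcal{A}/\widetilde{m}$ to polish $\widetilde{m}$ up to the desired relative accuracy. First I would apply Chebyshev's inequality to $\widetilde{m}$. Since $\widetilde{m}$ is the mean of $k = \lceil 32B \rceil$ i.i.d.\ copies of $v(\mathcal{A})$, $\Var(\widetilde{m}) = \Var(v(\mathcal{A}))/k \le B\mu^2/(32B) = \mu^2/32$, whence $\Pr[\,|\widetilde{m} - \mu| \ge \mu/2\,] \le 1/8$. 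So with probability at least $7/8$ we have $\mu/2 \le \widetilde{m} \le 3\mu/2$; condition on this good event.

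Next I would bound the second moment of $\mathcal{A}'$: using $\E[v(\mathcal{A})^2] = \Var(v(\mathcal{A})) + \mu^2 \le (B+1)\mu^2$ and $\widetilde{m} \ge \mu/2$, we get
\[
\|v(\mathcal{A}')\|_2^2 \;=\; \E[v(\mathcal{A})^2]/\widetilde{m}^2 \;\le\; 4(B+1),
\]
so $(\|v(\mathcal{A}')\|_2+1)^2 = O(B)$, and (using $B \ge 1$ to absorb the ``$+1$'' into constants) this quantity is at most a universal multiple of $(2\sqrt{B}+1)^2$. Now apply Lemma \ref{lem:boundedl2} to $\mathcal{A}'$ with accuracy $\epsilon' = 2\epsilon/(3(2\sqrt{B}+1)^2)$, boosted via the powering lemma to success probability $7/8$. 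This produces an estimate $\widetilde{\nu}$ of $\nu := \E[v(\mathcal{A}')] = \mu/\widetilde{m}$ with $|\widetilde{\nu} - \nu| \le \epsilon'(\|v(\mathcal{A}')\|_2+1)^2 \le 2\epsilon/3$. Rescaling by $\widetilde{m}$ gives
\[
|\widetilde{\mu} - \mu| \;=\; \widetilde{m}\,|\widetilde{\nu}-\nu| \;\le\; \tfrac{3\mu}{2}\cdot\tfrac{2\epsilon}{3} \;=\; \epsilon\,\mu,
\]
and a union bound over the two failure events yields total failure probability at most $1/8 + 1/8 = 1/4$.

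For the complexity, the $k = O(B)$ classical runs in step~1 consume $O(B)$ copies of $\ket{\psi}$. By Lemma \ref{lem:boundedl2} with accuracy $\epsilon' = \Theta(\epsilon/B)$, the quantum step uses a further $O(\log(B/\epsilon)\log\log(B/\epsilon))$ copies of $\ket{\psi}$ and $O((B/\epsilon)\log^{3/2}(B/\epsilon)\log\log(B/\epsilon))$ applications of $U$; noting that $\log B$ is either dominated by the explicit $B$ term or else $O(\log\log(1/\epsilon))$, we recover exactly the bounds in the statement. The main obstacle is the bookkeeping of the accuracy parameters, since Lemma \ref{lem:boundedl2} delivers an additive error that is blown up by the factor $(\|v(\mathcal{A}')\|_2+1)^2 = \Theta(B)$, and a further factor of $\widetilde{m} \le 3\mu/2$ is incurred when converting from relative error on $\nu$ back to relative error on $\mu$; the precise choice $\epsilon' = 2\epsilon/(3(2\sqrt{B}+1)^2)$ is calibrated to survive both blow-ups with room to spare.
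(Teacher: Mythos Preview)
Your argument is essentially the paper's own proof: Chebyshev on the $\lceil 32B\rceil$ classical samples to pin $\widetilde{m}$ within a factor of two of $\mu$, then Lemma~\ref{lem:boundedl2} applied to $\mathcal{A}/\widetilde{m}$ with accuracy $2\epsilon/(3(2\sqrt{B}+1)^2)$, rescale, and union-bound the two $1/8$ failure events. The only quibble is a small constant slip: you note $(\|v(\mathcal{A}')\|_2+1)^2$ is at most a \emph{universal multiple} of $(2\sqrt{B}+1)^2$, but then use that multiple as if it were $1$ when concluding $|\widetilde{\nu}-\nu|\le 2\epsilon/3$; the paper sidesteps this by bounding $\|v(\mathcal{A})\|_2/\widetilde{m}\le 2\|v(\mathcal{A})\|_2/\mu$ directly so that $(\|v(\mathcal{A}')\|_2+1)\le 2\sqrt{B}+1$ on the nose (modulo the same harmless $\sqrt{B}$ vs.\ $\sqrt{B+1}$ issue).
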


\begin{proof}
The complexity bounds follow from Lemma \ref{lem:boundedl2}; we now analyse the claim about accuracy. $\widetilde{m}$ is a random variable whose expectation is $\E[v(\mathcal{A})]$ and whose variance is $\Var(v(\mathcal{A}))/\lceil 32 B \rceil$. By Chebyshev's inequality, we have
\[ \Pr[|\widetilde{m}-\E[\widetilde{m}]| \ge |\E[\widetilde{m}]|/2 ] \le \frac{4 \Var(\widetilde{m})}{\E[\widetilde{m}]^2} = \frac{4 \Var(v(\mathcal{A}))}{\lceil 32 B \rceil \E[v(\mathcal{A})]^2} \le \frac{1}{8}. \]
We can thus assume that $\E[v(\mathcal{A})]/2 \le \widetilde{m} \le 3\,\E[v(\mathcal{A})]/2$. In this case, when we apply Algorithm \ref{alg:meansub} to $\mathcal{A}/\widetilde{m}$, we receive an estimate of $\E[v(\mathcal{A})]/\widetilde{m}$ which is accurate up to additive error
\[ \frac{ 2 \epsilon(\left\|v(\mathcal{A})\right\|_2 / \widetilde{m} + 1)^2}{3 (2 \sqrt{B} + 1)^2} \le \frac{ \epsilon\,\E[v(\mathcal{A})] (2 \left\|v(\mathcal{A})\right\|_2 / \E[v(\mathcal{A})] + 1)^2}{\widetilde{m}(2 \sqrt{B} + 1)^2} \le \frac{\epsilon\,\E[v(\mathcal{A})]}{\widetilde{m}} \]
except with probability $1/8$, where we use $\left\|v(\mathcal{A})\right\|_2/\E[v(\mathcal{A})] \le \sqrt{B}$. Multiplying by $\widetilde{m}$ and taking a union bound, we get an estimate of $\E[v(\mathcal{A})]$ which is accurate up to $\epsilon$ except with probability at most $1/4$.
\end{proof}

Once again, using the powering lemma we can repeat Algorithms \ref{alg:meanvariance} and \ref{alg:meanratio} $O(\log 1/\delta)$ times and take the median to improve their probabilities of success to $1-\delta$ for any $\delta>0$.

To see that Algorithms \ref{alg:meanvariance} and \ref{alg:meanratio} are close to optimal, we can appeal to a result of Nayak and Wu~\cite{nayak99}. Let $\mathcal{A}$ be an algorithm which picks an integer $x$ between 1 and $N$ uniformly at random, for some large $N$, and outputs $f(x)$ for some function $f:\{1,\dots,N\} \rightarrow \{0,1\}$. Then $\E[v(\mathcal{A})] = |\{x:f(x)=1\}|/N$. It was shown by Nayak and Wu~\cite{nayak99} that any quantum algorithm which computes this quantity for an arbitrary function $f$ up to (absolute or relative) error $\epsilon$ must make at most $\Omega(1/\epsilon)$ queries to $f$ in the case that $|\{x:f(x)=1\}|=N/2$. As the output of $\mathcal{A}$ for any such function has variance $1/4$, this implies that Algorithms \ref{alg:meansub} and \ref{alg:meanratio} are optimal in the black-box setting in terms of their scaling with $\epsilon$, up to polylogarithmic factors. By rescaling, we get a similar near-optimality claim for Algorithm \ref{alg:meanvariance} in terms of its scaling with $\sigma$.


\section{Partition function problems}
\label{sec:partition}

In this section we formally state and prove our results about partition function problems. We first recall the definitions from Section \ref{sec:approxpart}. A partition function $Z$ is defined by
\[ Z(\beta) = \sum_{x \in \Omega} e^{-\beta\,H(x)} \]
where $\beta$ is an inverse temperature and $H$ is a Hamiltonian function taking integer values in the set $\{0,\dots,n\}$. Let $0 = \beta_0 < \beta_1 < \dots < \beta_\ell = \infty$ be a sequence of inverse temperatures and assume that we can easily compute $Z(\beta_0) = |\Omega|$. We want to approximate $Z(\infty)$ by approximating the ratios $\alpha_i := Z(\beta_{i+1}) / Z(\beta_i)$ and using the telescoping product
\[ Z(\beta_\ell) = Z(\beta_0) \frac{Z(\beta_1)}{Z(\beta_0)} \frac{Z(\beta_2)}{Z(\beta_1)} \dots \frac{Z(\beta_\ell)}{Z(\beta_{\ell-1})}. \]
Finally, a sequence of Gibbs distributions $\pi_i$ is defined by
\[ \pi_i(x) = \frac{1}{Z(\beta_i)} e^{-\beta_i H(x)}. \]


\subsection{Chebyshev cooling schedules}
\label{app:chebyshev}

We start by motivating, and formally defining, the concept of a Chebyshev cooling schedule~\cite{stefankovic09}. To approximate $\alpha_i$ we define the random variable
\[ Y_i(x) = e^{-(\beta_{i+1}-\beta_i) H(x) }. \] 
Then
\[ \E[Y_i] := \E_{\pi_i}[Y_i] = \frac{1}{Z(\beta_i)} \sum_{x \in \Omega} e^{-\beta_i H(x) } e^{-(\beta_{i+1}-\beta_i) H(x) } = \frac{1}{Z(\beta_i)} \sum_{x \in \Omega} e^{-\beta_{i+1} H(x) }  = \frac{Z(\beta_{i+1})}{Z(\beta_i)} = \alpha_i. \]
The following result was shown by Dyer and Frieze~\cite{dyer92} (see~\cite{stefankovic09} for the statement here):

\begin{thm}
Let $Y_0,\dots,Y_{\ell-1}$ be independent random variables such that $\E[Y_i^2]/\E[Y_i]^2 \le B$ for all $i$, and write $\overline{Y} = \E[Y_0] \E[Y_1] \dots \E[Y_{\ell-1}]$. Let $\widetilde{\alpha}_i$ be the average of $16B\ell/\epsilon^2$ independent samples from $Y_i$, and set $\widetilde{Y} = \widetilde{\alpha}_0\,\widetilde{\alpha}_1 \dots \widetilde{\alpha}_{\ell-1}$. Then
\[ \Pr[(1-\epsilon)\overline{Y} \le \widetilde{Y} \le (1+\epsilon)\overline{Y}] \ge 3/4. \]
\end{thm}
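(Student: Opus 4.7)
The plan is to prove this by working with the normalised ratio $\widetilde{Y}/\overline{Y}$, computing its second moment via independence, and then applying Chebyshev's inequality. Concretely, for each $i$ let $Z_i = \widetilde{\alpha}_i/\E[Y_i]$; then $\E[Z_i] = 1$ and $\widetilde{Y}/\overline{Y} = \prod_i Z_i$, so by independence $\E[\widetilde{Y}/\overline{Y}] = 1$ and $\E[(\widetilde{Y}/\overline{Y})^2] = \prod_i \E[Z_i^2]$.

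The first step is to bound $\Var(Z_i)$. Writing $N = 16 B \ell / \epsilon^2$, since $\widetilde{\alpha}_i$ is an average of $N$ i.i.d.\ copies of $Y_i$, we have $\Var(\widetilde{\alpha}_i) = \Var(Y_i)/N$. The hypothesis $\E[Y_i^2]/\E[Y_i]^2 \le B$ gives $\Var(Y_i) \le B\,\E[Y_i]^2$, so $\Var(Z_i) = \Var(\widetilde{\alpha}_i)/\E[Y_i]^2 \le B/N = \epsilon^2/(16\ell)$. Hence $\E[Z_i^2] = 1 + \Var(Z_i) \le 1 + \epsilon^2/(16\ell)$.

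The second step is to aggregate over $i$. Using $(1+x)^\ell \le e^{\ell x}$,
\[ \E[(\widetilde{Y}/\overline{Y})^2] \le \left(1 + \frac{\epsilon^2}{16\ell}\right)^{\!\ell} \le e^{\epsilon^2/16}. \]
Assuming $\epsilon \le 1$ (the interesting regime), the bound $e^x - 1 \le 2x$ valid on $[0,1]$ yields
\[ \Var(\widetilde{Y}/\overline{Y}) = \E[(\widetilde{Y}/\overline{Y})^2] - 1 \le e^{\epsilon^2/16} - 1 \le \frac{\epsilon^2}{8}. \]
Chebyshev's inequality then gives
\[ \Pr\!\left[\left|\widetilde{Y}/\overline{Y} - 1\right| \ge \epsilon\right] \le \frac{\Var(\widetilde{Y}/\overline{Y})}{\epsilon^2} \le \frac{1}{8}, \]
which is comfortably better than the claimed $1/4$, so the conclusion $\Pr[(1-\epsilon)\overline{Y} \le \widetilde{Y} \le (1+\epsilon)\overline{Y}] \ge 3/4$ follows immediately.

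The only mild subtlety is the telescoping inequality $(1+\epsilon^2/(16\ell))^\ell \le e^{\epsilon^2/16}$ and the linearisation of $e^x - 1$, which controls how the errors in the $\ell$ factors compound; this is what fixes the sample count $16 B \ell/\epsilon^2$ per factor (the factor of $\ell$ is needed precisely so that the summed variance stays $O(\epsilon^2)$ rather than $O(\ell \epsilon^2)$). Everything else is routine: independence, the variance identity $\E[Z_i^2] = 1 + \Var(Z_i)$, and a single application of Chebyshev to the product variable.
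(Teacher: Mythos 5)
Your proof is correct, and it is the standard second-moment argument that the paper attributes to Dyer and Frieze; the paper itself cites the result without reproducing a proof, so there is no in-paper argument to compare against. Each step checks out: the independence of the $\widetilde{\alpha}_i$ gives $\E[(\widetilde{Y}/\overline{Y})^2] = \prod_i \E[Z_i^2]$, the hypothesis $\E[Y_i^2]/\E[Y_i]^2 \le B$ yields $\Var(Z_i) \le B/N = \epsilon^2/(16\ell)$ (in fact the slightly tighter $(B-1)/N$), the telescoping bound $(1+\epsilon^2/(16\ell))^\ell \le e^{\epsilon^2/16}$ is valid, the linearisation $e^x - 1 \le 2x$ holds on $[0,1]$ (you only need $x = \epsilon^2/16$, which is at most $1$ for any $\epsilon \le 4$, so the restriction $\epsilon \le 1$ is harmless), and Chebyshev then gives failure probability at most $1/8 < 1/4$. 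The one implicit assumption, as you should note, is $\overline{Y} > 0$ so that the event $(1-\epsilon)\overline{Y} \le \widetilde{Y} \le (1+\epsilon)\overline{Y}$ coincides with $|\widetilde{Y}/\overline{Y} - 1| \le \epsilon$; this is automatic in the intended application since the $Y_i$ there are positive, and is in any case forced by the requirement that $\E[Y_i^2]/\E[Y_i]^2$ be finite and the nontriviality of the conclusion.
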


Thus a classical algorithm can approximate $Z(\infty)$ up to relative error $\epsilon$ using $O(B \ell^2 / \epsilon^2)$ samples in total, assuming that $Z(0)$ can be computed without using any samples and that we have $\E[Y_i^2]/\E[Y_i]^2 \le B$. To characterise the latter  constraint, observe that we have
\[ \E[Y_i^2] = \frac{1}{Z(\beta_i)} \sum_{x \in \Omega} e^{-\beta_i H(x)} e^{-2(\beta_{i+1}-\beta_i) H(x)} = \frac{1}{Z(\beta_i)} \sum_{x \in \Omega} e^{(\beta_i-2\beta_{i+1}) H(x)} = \frac{Z(2\beta_{i+1}-\beta_i)}{Z(\beta_i)}, \]
so
\[ \frac{\E[Y_i^2]}{(\E[Y_i])^2} = \frac{Z(2\beta_{i+1}-\beta_i)Z(\beta_i)}{Z(\beta_{i+1})^2}. \]
This motivates the following definition:

\begin{dfn}[Chebyshev cooling schedules~\cite{stefankovic09}]
\label{dfn:chebysched}
Let $Z$ be a partition function. Let $\beta_0,\dots,\beta_\ell$ be a sequence of inverse temperatures such that $0 = \beta_0 < \beta_1 < \dots < \beta_\ell = \infty$. The sequence is called a $B$-Chebyshev cooling schedule for $Z$ if
\[ \frac{Z(2\beta_{i+1}-\beta_i)Z(\beta_i)}{Z(\beta_{i+1})^2} \le B \]
for all $i$, for some fixed $B$.
\end{dfn}

Assume that we have a sequence of estimates $\widetilde{\alpha}_i$ such that, for all $i$, $|\widetilde{\alpha}_i - \alpha_i| \le (\epsilon / 2\ell)\,\alpha_i$ with probability at least $1 - 1/(4\ell)$. We output as a final estimate
\[ \widetilde{Z} = Z(0)\,\widetilde{\alpha}_0\,\widetilde{\alpha}_1 \dots \widetilde{\alpha}_{\ell-1}. \]
By a union bound, all of the estimates $\widetilde{\alpha}_i$ are accurate to within $(\epsilon / 2\ell)\,\alpha_i$, except with probability at most $1/4$. Assuming that all the estimates are indeed accurate, we have
\[ 1-\epsilon/2 \le (1-\epsilon/(2\ell))^\ell \le \frac{\widetilde{Z}}{Z(\infty)} \le (1+\epsilon/(2\ell))^\ell \le e^{\epsilon/2} \le 1+\epsilon \]
for $\epsilon < 1$. Thus $|\widetilde{Z} - Z(\infty)| \le \epsilon\,Z(\infty)$ with probability at least $3/4$.

Using these ideas, we can formalise the discussion in Section \ref{sec:approxpart}.

\begin{thm}
Let $Z$ be a partition function with $|\Omega|=A$. Assume that we are given a $B$-Chebyshev cooling schedule $0 = \beta_0 < \beta_1 < \dots < \beta_\ell = \infty$ for $Z$. Further assume that we have the ability to exactly sample from the distributions $\pi_i$, $i=1,\dots,\ell-1$. Then there is a quantum algorithm which outputs an estimate $\widetilde{Z}$ such that
\[ \Pr[(1-\epsilon) Z(\infty) \le \widetilde{Z} \le (1+\epsilon)Z(\infty)] \ge 3/4. \]
using
\[ O\left(\frac{B\ell\log \ell}{\epsilon}\log^{3/2} \left(\frac{B\ell}{\epsilon} \right) \log \log \left(\frac{B\ell}{\epsilon} \right) \right)  = \widetilde{O}\left(\frac{B\ell^2}{\epsilon} \right) \]
samples in total.
\end{thm}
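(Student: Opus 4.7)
The plan is to apply Algorithm \ref{alg:meanratio} once to each ratio estimator $Y_i$ and then multiply the estimates together, using the telescoping product $Z(\infty) = Z(0)\prod_{i=0}^{\ell-1}\alpha_i$ established above. For each $i = 0,\dots,\ell-1$, I would take $\mathcal{A}_i$ to be the quantum algorithm which draws one exact sample $x \sim \pi_i$ (using the assumed sampling ability) and outputs $Y_i(x) = e^{-(\beta_{i+1}-\beta_i)H(x)}$. The computation of $\E[Y_i^2]$ performed earlier in this section, combined with the $B$-Chebyshev condition, gives $\E[v(\mathcal{A}_i)^2]/\E[v(\mathcal{A}_i)]^2 \le B$, and since $B \ge 1$ this also implies $\Var(v(\mathcal{A}_i))/\E[v(\mathcal{A}_i)]^2 \le B$, which is exactly the hypothesis of Algorithm \ref{alg:meanratio}. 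Note also that $v(\mathcal{A}_i) \ge 0$ since $Y_i$ takes values in $(0,1]$.

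Next, I would invoke Algorithm \ref{alg:meanratio} on each $\mathcal{A}_i$ with target relative accuracy $\epsilon' = \epsilon/(2\ell)$, and, via the powering lemma (Lemma \ref{lem:powering}), boost the failure probability down to $1/(4\ell)$ at a $O(\log \ell)$ multiplicative cost. Theorem \ref{thm:meanratio} then guarantees that, for each $i$, the resulting estimate $\widetilde{\alpha}_i$ satisfies $|\widetilde{\alpha}_i - \alpha_i| \le (\epsilon/(2\ell))\,\alpha_i$ except with probability $1/(4\ell)$, using $\widetilde{O}(B\ell/\epsilon)$ samples from $\pi_i$ per estimate (with the $\log\ell$ powering factor and the polylogarithmic factors of Theorem \ref{thm:meanratio} collected inside the $\widetilde{O}$).

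Finally, I would output $\widetilde{Z} = Z(0)\,\widetilde{\alpha}_0\cdots\widetilde{\alpha}_{\ell-1}$. A union bound over the $\ell$ subroutine calls gives overall failure probability at most $1/4$, and on the complementary event the telescoping-product error estimate already displayed at the end of Section \ref{app:chebyshev} immediately yields $(1-\epsilon)Z(\infty) \le \widetilde{Z} \le (1+\epsilon)Z(\infty)$. Summing the per-ratio sample complexities over the $\ell$ indices multiplies by another factor of $\ell$, producing the claimed $\widetilde{O}(B\ell^2/\epsilon)$ total bound with the stated explicit polylog.

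I do not expect a serious obstacle here: the heavy lifting is all packaged inside Algorithm \ref{alg:meanratio} and the elementary telescoping calculation already given in this section. The only real bookkeeping subtleties are (i) translating the Chebyshev-schedule hypothesis $\E[Y_i^2]/\E[Y_i]^2 \le B$ into the relative-variance hypothesis required by Algorithm \ref{alg:meanratio} (immediate from $\Var(Y_i) \le \E[Y_i^2]$ and $B \ge 1$), and (ii) choosing the per-ratio accuracy and failure probability so that both the product error stays within $\epsilon\,Z(\infty)$ and the union bound yields overall success probability $3/4$; the natural choices $\epsilon' = \epsilon/(2\ell)$ and $\delta = 1/(4\ell)$ work.
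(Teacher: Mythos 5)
Your proposal is correct and matches the paper's proof essentially step for step: both invoke Algorithm \ref{alg:meanratio} on each ratio estimator $Y_i$ with relative accuracy $\epsilon/(2\ell)$ and failure probability $1/(4\ell)$, use the Chebyshev-schedule condition to supply the relative-variance hypothesis, and then union-bound and telescope exactly as you describe. The only differences are cosmetic bookkeeping (you make the $\Var(Y_i) \le \E[Y_i^2]$ step and the $O(\log\ell)$ powering factor explicit, which the paper leaves implicit in the sample-count expression).
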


\begin{proof}
For each $i=1,\dots,\ell-1$, we use Algorithm \ref{alg:meanratio} to estimate $\E[Y_i]$ up to additive error $(\epsilon /(2\ell))\E[Y_i]$ with failure probability $1/(4\ell)$. As the $\beta_i$ form a $B$-Chebyshev cooling schedule, $\E[Y_i^2] / \E[Y_i]^2 \le B$, so $\Var(Y_i) / \E[Y_i]^2 \le B$. By Theorem \ref{thm:meanratio}, each use of Algorithm \ref{alg:meanratio} requires
\[ O\left(\frac{B\ell}{\epsilon}\log^{3/2} \left(\frac{B\ell}{\epsilon} \right) \log \log \left(\frac{B\ell}{\epsilon} \right) \log \ell \right) \]
samples from $\pi_i$ to achieve the desired accuracy and failure probability. The total number of samples is thus
\[ O\left(\frac{B\ell^2\log \ell}{\epsilon}\log^{3/2} \left(\frac{B\ell}{\epsilon} \right) \log \log \left(\frac{B\ell}{\epsilon} \right) \right) \]
as claimed.
\end{proof}


\subsection{Approximate sampling}
\label{sec:approxsamp}

It is unfortunately not always possible to exactly sample from the distributions $\pi_i$. However, one classical way of approximately sampling from each of these distributions is to use a (reversible, ergodic) Markov chain which has unique stationary distribution $\pi_i$. Assume the Markov chain has relaxation time $\tau$, where $\tau := 1/(1 - |\lambda_1|)$, and $\lambda_1$ is the second largest eigenvalue in absolute value. Then one can sample from a distribution $\widetilde{\pi}_i$ such that $\|\widetilde{\pi}_i-\pi_i\| \le \epsilon$ using $O(\tau \log(1/(\epsilon\pi_{\min,i})))$ steps of the chain, where $\pi_{\min,i} = \min_x |\pi_i(x)|$~\cite{levin09}. We would like to replace the classical Markov chain with a quantum walk, to obtain a faster mixing time. A construction due to Szegedy~\cite{szegedy04} defines a quantum walk corresponding to any ergodic Markov chain, such that the dependence on $\tau$ in the mixing time can be improved to $O(\sqrt{\tau})$~\cite{richter07}. Unfortunately, it is not known whether in general the dependence on $\pi_{\min,i}$ can be kept logarithmic~\cite{richter07,dunjko15}. Indeed, proving such a result is likely to be hard, as it would imply a polynomial-time quantum algorithm for graph isomorphism~\cite{aharonov07a}.

Nevertheless, it was shown by Wocjan and Abeyesinghe~\cite{wocjan08} (improving previous work on using quantum walks for classical annealing~\cite{somma08}) that one can achieve relatively efficient quantum sampling if one has access to a sequence of slowly varying Markov chains.

\begin{thm}[Wocjan and Abeyesinghe~\cite{wocjan08}]
\label{thm:warmstart}
Let $M_0,\dots,M_r$ be classical reversible Markov chains with stationary distributions $\pi_0,\dots,\pi_r$ such that each chain has relaxation time at most $\tau$. Assume that $|\ip{\pi_i}{\pi_{i+1}}|^2 \ge p$ for some $p>0$ and all $i \in \{0,\dots,r-1\}$, and that we can prepare the state $\ket{\pi_0}$. Then, for any $\epsilon > 0$, there is a quantum algorithm which produces a quantum state $\ket{\widetilde{\pi}_r}$ such that $\| \ket{\widetilde{\pi}_r} - \ket{\pi_r}\ket{0^a} \| \le \epsilon$, for some integer $a$. The algorithm uses
\[ O( r \sqrt{\tau} \log^2 (r/\epsilon) (1/p) \log(1/p)) \]
steps in total of the quantum walk operators $W_i$ corresponding to the chains $M_i$.
\end{thm}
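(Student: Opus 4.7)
The construction rests on two pillars: Szegedy's quantisation of reversible Markov chains, and the use of phase estimation to implement approximate reflections. For each chain $M_i$, let $W_i$ be the Szegedy walk operator acting on a doubled state space. The key spectral fact is that $\ket{\pi_i}$ (suitably encoded as a coherent superposition on the two registers) is the unique $+1$ eigenstate of $W_i$, and the phase gap of $W_i$ around $0$ is $\Omega(1/\sqrt{\tau})$, because Szegedy's quantisation quadratically amplifies the spectral gap of $M_i$. Running phase estimation on $W_i$ to precision finer than this phase gap, using $O(\sqrt{\tau}\log(1/\delta))$ applications of $W_i$, and testing whether the phase register reads $0$, implements an approximate projector onto $\ket{\pi_i}$ with operator-norm error at most $\delta$, and hence an approximate reflection $\widetilde{R}_i \approx 2\proj{\pi_i} - I$.

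With this primitive in hand, the task reduces to transforming $\ket{\pi_0}$ into $\ket{\pi_r}$ one step at a time. The slow-variation hypothesis $|\ip{\pi_i}{\pi_{i+1}}|^2 \ge p$ implies that applying the $W_{i+1}$-based phase-estimation measurement to $\ket{\pi_i}$ yields outcome ``phase $0$'' with probability at least $p$ (minus a precision error), and on success the post-measurement state is close to $\ket{\pi_{i+1}}$. To avoid losing the state on a failed outcome, I would alternate projections onto the $+1$ eigenspaces of $W_i$ and $W_{i+1}$: a failure leaves the state in the orthogonal complement of $\ket{\pi_{i+1}}$, and re-projecting onto $\ket{\pi_i}$ succeeds with probability $\ge p$ by symmetry, returning the state essentially to the starting point so that the attempt can be retried. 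A Marriott--Watrous style analysis shows that $O((1/p)\log(1/\delta'))$ such rounds suffice to drive the total failure probability below $\delta'$, and on success one obtains $\ket{\pi_{i+1}}$ up to $O(\delta)$ error.

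Finally, chain $r$ such transitions together, starting from the prepared state $\ket{\pi_0}$, setting the per-step error budget to $O(\epsilon/r)$ and applying a union bound to obtain a final state within $\epsilon$ of $\ket{\pi_r}\ket{0^a}$ in norm. Each transition costs $O((1/p)\log(1/p))$ rounds of phase estimation, each costing $O(\sqrt{\tau}\log(r/\epsilon))$ walk steps, yielding a total of $O(r\sqrt{\tau}\log^2(r/\epsilon)(1/p)\log(1/p))$ walk steps as claimed. The main technical obstacle is the error accumulation: phase estimation implements only an approximate projector, leaving small residual amplitude on unwanted eigenspaces of each $W_i$, and these perturbations could in principle snowball under the $r$-fold composition and the rewinding loop. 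I would handle this by bounding, at each round, the operator-norm distance between the idealised and implemented projectors, and using the triangle inequality together with the stability of the rewind procedure under small perturbations of the projectors to show that the accumulated deviation from $\ket{\pi_r}\ket{0^a}$ stays within the allotted budget.
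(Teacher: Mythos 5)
The paper does not prove this theorem; it is quoted verbatim from Wocjan and Abeyesinghe~\cite{wocjan08}, so there is no in-paper argument to compare against. That said, your reconstruction captures the essential structure of the cited proof: Szegedy's quantisation with $\ket{\pi_i}$ as the unique phase-zero eigenstate of $W_i$ and a phase gap of order $1/\sqrt{\tau}$, phase estimation to synthesise approximate projectors onto $\ket{\pi_i}$, a Marriott--Watrous rewinding loop exploiting the two-projector (Jordan block) structure to convert the $\sqrt{p}$-overlap into an $O((1/p)\log(1/p))$-round transition from $\ket{\pi_i}$ to $\ket{\pi_{i+1}}$, and an $\epsilon/r$ per-step error budget accumulated by a triangle-inequality argument across the $r$-fold composition. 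All of these components are present in the original construction, and your identification of the error-accumulation issue as the central technical obstacle is correct.

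One small inconsistency: the cost accounting you give does not reproduce the stated $\log^2(r/\epsilon)$ factor. Multiplying $r$ transitions by $O((1/p)\log(1/p))$ rounds and $O(\sqrt{\tau}\log(r/\epsilon))$ walk steps per round gives only a single $\log(r/\epsilon)$. The missing logarithm arises because each phase-estimation call must itself be repeated (e.g.\ a median over $O(\log(r/\epsilon))$ independent runs) to drive its \emph{failure probability}, not merely its precision, below $O(\epsilon/r)$ per call, on top of the $O(\log(r/\epsilon))$-bit precision needed to resolve the phase gap. Tracking these two distinct logarithms separately recovers the $\log^2(r/\epsilon)$ in the theorem. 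This is a bookkeeping point rather than a conceptual gap, but it is worth noting since the statement is explicit about the polylogarithmic overhead.
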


In addition, one can approximately reflect about the states $\ket{\pi_i}$ more efficiently still, with a runtime that does not depend on $r$. This will be helpful because Algorithm \ref{alg:meanratio} uses significantly more reflections than it does copies of the starting state.

\begin{thm}[Wocjan and Abeyesinghe~\cite{wocjan08}, see~\cite{wocjan09} for version here]
\label{thm:reflect}
Let $M_0,\dots,M_r$ be classical reversible Markov chains with stationary distributions $\pi_0,\dots,\pi_r$ such that each chain has relaxation time at most $\tau$. For each $i$ there is an approximate reflection operator $\widetilde{R}_i$ such that
\[ \widetilde{R}_i \ket{\phi} \ket{0^b} = (2 \proj{\psi} - I)\ket{\phi} \ket{0^b} + \ket{\xi}, \]
where $\ket{\phi}$ is arbitrary, $b = O((\log \tau) (\log 1/\epsilon))$, and $\ket{\xi}$ is a vector with $\|\ket{\xi}\| \le \epsilon$. The algorithm uses $O(\sqrt{\tau} \log (1/\epsilon))$ steps of the quantum walk operator $W_i$ corresponding to the chain $M_i$.
\end{thm}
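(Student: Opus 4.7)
The plan is to implement each approximate reflection by phase-estimating Szegedy's walk operator $W_i$ on a fresh ancilla register, flipping the sign conditioned on the phase-estimate being nonzero, and then uncomputing the phase estimation. The key enabling fact is that $W_i$ has the quantum-sample state $\ket{\pi_i}$ (suitably embedded in $\mathcal{H}\otimes\mathcal{H}$) as a $+1$ eigenvector while every orthogonal eigenvector of $W_i$ has eigenphase $|\theta|=\Omega(1/\sqrt{\tau})$. This quadratic improvement of the phase gap over the classical mixing rate is what lets the construction use only $O(\sqrt{\tau})$ calls to $W_i$ per round of phase estimation, matching the claimed walk-step count up to the extra $\log(1/\epsilon)$ factor.

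First I would set up the circuit: allocate $b$ ancilla qubits in state $\ket{0^b}$, run standard quantum phase estimation using $W_i$, apply a unitary that acts as $-I$ on every nonzero phase-register string and as $+I$ on $\ket{0^b}$, then run the phase estimation backwards to uncompute. Under ideal phase estimation this realizes exactly $2\proj{\pi_i}-I$ on the $\mathcal{H}\otimes\mathcal{H}$ register while restoring the ancillas to $\ket{0^b}$. The next step is the error analysis: expand an arbitrary input as $\ket{\phi}=\alpha\ket{\pi_i}+\sum_j\beta_j\ket{\chi_j}$ in the eigenbasis of $W_i$. The $\ket{\pi_i}$ component is handled exactly; each $\ket{\chi_j}$ has phase bounded away from $0$ by $\Omega(1/\sqrt{\tau})$, so with $b_0=O(\log\tau)$ phase-register bits and $O(\sqrt{\tau})$ uses of $W_i$, the phase register's distribution is concentrated away from the all-zeros string, leaving only a constant-small amplitude on $\ket{0^b}$ per component. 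To drive the total error below $\epsilon$ I would apply the standard boosting trick, either by running $O(\log(1/\epsilon))$ independent phase-estimation blocks in parallel and sign-flipping only if a majority of them report nonzero, or by using the Marriott--Watrous consistent phase-estimation technique. Either variant inflates the walk-step count to $O(\sqrt{\tau}\log(1/\epsilon))$ and the ancilla width to $b=O(\log\tau\cdot\log(1/\epsilon))$, matching the claimed resource bounds.

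The main obstacle is controlling eigenvectors whose phase sits just above the gap $1/\sqrt{\tau}$: these are exactly the vectors that phase estimation routinely mis-rounds onto the all-zeros string, so a naive single-shot implementation only achieves constant error, and, more subtly, the uncomputation does not cleanly return the ancilla register to $\ket{0^b}$ on those components. Showing that after the boosting step the garbage vector $\ket{\xi}$ has norm at most $\epsilon$ uniformly in the input $\ket{\phi}$ (rather than merely on average, or only for inputs supported on the stationary subspace) is the technical heart of the argument and is where the Wocjan--Abeyesinghe construction invoked in Theorem \ref{thm:reflect} does its real work; the remaining steps — Szegedy's phase gap bound, the phase-estimation resource count, and the uncomputation — are standard once boosted phase estimation is in hand.
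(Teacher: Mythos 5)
The paper does not prove Theorem~\ref{thm:reflect}: it is imported as a black box from Wocjan--Abeyesinghe~\cite{wocjan08} (in the form stated in~\cite{wocjan09}), and the text proceeds immediately to applying it. There is therefore no in-paper argument to match against. Your sketch correctly reconstructs the cited construction: run phase estimation of Szegedy's walk $W_i$ on fresh ancillas, exploit the quadratically amplified phase gap $\Omega(1/\sqrt{\tau})$ to isolate the stationary eigenvector, flip the sign conditioned on a nonzero phase outcome, uncompute, and boost over $O(\log(1/\epsilon))$ parallel rounds with a majority vote to push the leakage below $\epsilon$. The resource accounting ($O(\sqrt{\tau}\log(1/\epsilon))$ walk steps, $b=O(\log\tau\cdot\log(1/\epsilon))$ ancillas) is also right.

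One remark on what you call the technical heart, obtaining a uniform (not merely average) bound on $\|\ket{\xi}\|$: this drops out of the circuit structure more directly than your last paragraph suggests. Every gate in the construction either fixes the walk register or applies a controlled power of $W_i$, which is diagonal in the eigenbasis. Hence the input $\ket{\chi_j}\ket{0^b}$ is mapped to a state of the form $\ket{\chi_j}\otimes\ket{a_j}$, and the per-component garbage $\ket{\xi_j}=\ket{\chi_j}\otimes(\ket{a_j}+\ket{0^b})$ lives entirely in $\ket{\chi_j}\otimes(\text{ancillas})$. For distinct $j$ these garbage vectors are orthogonal, so writing $\ket{\phi}=\alpha\ket{\pi_i}+\sum_j\beta_j\ket{\chi_j}$ gives $\|\ket{\xi}\|^2=\sum_j|\beta_j|^2\|\ket{\xi_j}\|^2\le\max_j\|\ket{\xi_j}\|^2$ by Pythagoras, with no lossy triangle inequality. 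The $O(\log(1/\epsilon))$-round boosting you describe is exactly what makes each $\|\ket{\xi_j}\|\le\epsilon$, and the uniform claim then follows for free.
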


In our setting, we can easily create the quantum state $\ket{\pi_0}$, which is the uniform superposition over all configurations $x$. We now show that the overlaps $|\ip{\pi_i}{\pi_{i+1}}|^2$ are large for all $i$. We go via the chi-squared divergence
\[ \chi^2(\nu,\pi) := \sum_{x \in \Omega} \pi(x)\left( \frac{\nu(x)}{\pi(x)}-1 \right)^2 = \sum_{x \in \Omega} \frac{\nu(x)^2}{\pi(x)} - 1. \]
As noted in~\cite{stefankovic09}, one can calculate that
\be \label{eq:chicheb} \chi^2(\pi_{i+1},\pi_i) = \frac{Z(\beta_i)Z(2\beta_{i+1} - \beta_i)}{Z(\beta_{i+1})^2} - 1. \ee
Therefore, if the $\beta_i$ values form a Chebyshev cooling schedule, $\chi^2(\pi_{i+1},\pi_i) \le B-1$ for all $i$. For any distributions $\nu$, $\pi$, we also have
\[ \frac{1}{\sqrt{\chi^2(\nu,\pi)+1}} = \frac{1}{\sqrt{\sum_{x \in \Omega} \nu(x) \frac{\nu(x)}{\pi(x)}} } \le \sum_{x \in \Omega} \nu(x) \sqrt{\frac{\pi(x)}{\nu(x)}} = \ip{\nu}{\pi} \]
by applying Jensen's inequality to the function $x \mapsto 1/\sqrt{x}$. So, for all $i$, $|\ip{\pi_i}{\pi_{i+1}}|^2 \ge 1/B$. Note that in~\cite{stefankovic09} it was necessary to introduce the concept of a reversible Chebyshev cooling schedule to facilitate ``warm starts'' of the Markov chains used in the algorithm. That work uses the fact that one can efficiently sample from $\pi_{i+1}$, given access to samples from $\pi_i$, if $\chi^2(\pi_i,\pi_{i+1}) = O(1)$; this is the reverse of the condition (\ref{eq:chicheb}). Here we do not need to reverse the schedule as the precondition $|\ip{\pi_i}{\pi_{i+1}}|^2 \ge \Omega(1)$ required for Theorem \ref{thm:warmstart} is already symmetric.

We are now ready to formally state our result about approximating partition functions. We assume that $\epsilon$ is relatively small to simplify the bounds; this is not an essential restriction.

\begin{thm}
Let $Z$ be a partition function. Assume we have a $B$-Chebyshev cooling schedule $\beta_0 = 0 < \beta_1 < \beta_2 < \dots < \beta_\ell = \infty$ for $B=O(1)$. Assume that for every inverse temperature $\beta_i$ we have a reversible ergodic Markov chain $M_i$ with stationary distribution $\pi_i$ and relaxation time upper-bounded by $\tau$. Further assume that we can sample directly from $M_0$. Then, for any $\delta > 0$ and $\epsilon = O(1/\sqrt{\log \ell})$, there is a quantum algorithm which uses
\[ O((\ell^2 \sqrt{\tau} / \epsilon) \log^{5/2} (\ell/\epsilon) \log (\ell/\delta) \log \log(\ell/\epsilon) ) = \widetilde{O}(\ell^2 \sqrt{\tau} / \epsilon) \]
steps of the quantum walks corresponding to the $M_i$ chains and outputs $\widetilde{Z}$ such that
\[ \Pr[(1-\epsilon) Z(\infty) \le \widetilde{Z} \le (1+\epsilon)Z(\infty)] \ge 1-\delta. \]
\end{thm}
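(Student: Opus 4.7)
The plan is to combine the mean-estimation procedure of Algorithm \ref{alg:meanratio} with the quantum-walk primitives of Theorems \ref{thm:warmstart} and \ref{thm:reflect}, mirroring the exact-sampling argument just given in Section \ref{app:chebyshev} but replacing each classical sample from $\pi_i$ by the appropriate quantum walk subroutine. Concretely, for each $i=0,\dots,\ell-1$ I would apply Algorithm \ref{alg:meanratio} to the random variable $Y_i$ (which has $\Var(Y_i)/\E[Y_i]^2 \le B-1 = O(1)$ by the Chebyshev property) to obtain $\widetilde{\alpha}_i$ within relative error $\epsilon/(2\ell)$, amplifying the success probability from $3/4$ to $1-\delta/\ell$ by an $O(\log(\ell/\delta))$ powering overhead. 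A union bound over the $\ell$ estimates, followed by the telescoping product $\widetilde{Z} = Z(0)\,\widetilde{\alpha}_0 \cdots \widetilde{\alpha}_{\ell-1}$ and the standard $(1\pm\epsilon/(2\ell))^\ell$ calculation from Section \ref{app:chebyshev}, delivers the claimed overall accuracy and failure probability.

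The next step is to realise Algorithm \ref{alg:meanratio} in a setting where we have only Markov chains and not exact samples. The algorithm requires (i) a handful of copies of the state $\ket{\pi_i}$ on which $Y_i$ is evaluated coherently, and (ii) many reflections $2\proj{\pi_i}-I$. For (i), I would invoke Theorem \ref{thm:warmstart} along the prefix $M_0,\dots,M_i$ of the cooling schedule, starting from the uniform state $\ket{\pi_0}$ which is easy to prepare. The Chebyshev condition, together with the Jensen-inequality bound $|\ip{\pi_j}{\pi_{j+1}}|^2 \ge 1/(\chi^2(\pi_{j+1},\pi_j)+1) \ge 1/B$ derived in Section \ref{sec:approxsamp}, verifies the hypothesis of that theorem with $p = \Omega(1)$, so producing one approximate copy of $\ket{\pi_i}$ costs $\widetilde{O}(\ell\sqrt{\tau})$ quantum walk steps. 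For (ii), I would plug in Theorem \ref{thm:reflect}, whose cost is only $\widetilde{O}(\sqrt{\tau})$ per approximate reflection and is $\ell$-independent; this is precisely why the dominant part of the cost scales with $\ell$ only through one factor of $\ell$ rather than two.

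Counting resources then goes as follows. By Theorem \ref{thm:meanratio}, estimating a single $\alpha_i$ to relative error $\epsilon/(2\ell)$ with failure probability $\delta/\ell$ uses $\widetilde{O}(1)$ copies of $\ket{\pi_i}$ and $\widetilde{O}(\ell/\epsilon)$ reflections about it, times an $O(\log(\ell/\delta))$ powering factor. Multiplying by the per-resource quantum walk costs gives $\widetilde{O}(\ell\sqrt{\tau}) + \widetilde{O}(\ell\sqrt{\tau}/\epsilon) = \widetilde{O}(\ell\sqrt{\tau}/\epsilon)$ walk steps per index $i$, and summing over the $\ell$ indices produces the stated $\widetilde{O}(\ell^2\sqrt{\tau}/\epsilon)$ bound. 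The polylogarithmic factors $\log^{5/2}(\ell/\epsilon)$, $\log(\ell/\delta)$ and $\log\log(\ell/\epsilon)$ should drop out by tracking the $\log^{3/2}$ factor in Theorem \ref{thm:meanratio}, the extra $\log$ from Theorem \ref{thm:warmstart}, and the powering overhead.

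The main obstacle I expect is bookkeeping the approximation errors introduced by the two walk subroutines so that they do not degrade Algorithm \ref{alg:meanratio}'s output. Each approximate reflection differs from the ideal reflection by a vector of norm at most $\epsilon'$, and Algorithm \ref{alg:meanratio} invokes $\widetilde{O}(\ell/\epsilon)$ of them, so these errors accumulate additively by the triangle inequality; similarly the approximate copy of $\ket{\pi_i}$ contributes an error of order its preparation tolerance. I would therefore choose both tolerances to be of order a sufficiently small polynomial in $\epsilon/\ell$ (and $1/\delta$), which only multiplies the walk costs by further logarithmic factors absorbed into the $\widetilde{O}$. The restriction $\epsilon = O(1/\sqrt{\log \ell})$ noted in the statement should appear exactly here, coming from the implicit $\log^{1/2}(1/\epsilon)$ factor inside $\epsilon(\|v(\mathcal{A})\|_2+1)^2$ in Lemma \ref{lem:boundedl2} when it is invoked at scale $\ell/\epsilon$, which needs to stay below $1$ for the relative-error guarantee of Theorem \ref{thm:meanratio} to go through unchanged.
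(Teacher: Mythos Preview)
Your proposal is correct and follows essentially the same route as the paper: apply Algorithm~\ref{alg:meanratio} to each $Y_i$ at relative accuracy $\epsilon/(2\ell)$, realise its copies of $\ket{\pi_i}$ via Theorem~\ref{thm:warmstart} and its reflections via Theorem~\ref{thm:reflect}, set the walk tolerances to an inverse polynomial in the number $R$ of reflections so the accumulated $\ell_2$ error stays $O(1)$, and finish with the powering lemma and a union bound.

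One small correction: your explanation of the hypothesis $\epsilon = O(1/\sqrt{\log\ell})$ is not right. There is no hidden $\log^{1/2}$ factor in the error bound of Lemma~\ref{lem:boundedl2}, and Theorem~\ref{thm:meanratio} needs no such restriction. In the paper this condition is used only at the very last step of the resource count, to ensure that the total state-preparation cost $O(\ell^2\sqrt{\tau}(\log^2\ell)\log(\ell/\epsilon)\log\log(\ell/\epsilon)\log(\ell/\delta))$ is dominated by the total reflection cost $O((\ell^2\sqrt{\tau}/\epsilon)\log^{5/2}(\ell/\epsilon)\log\log(\ell/\epsilon)\log(\ell/\delta))$, so that the sum can be written as the single clean expression in the theorem statement. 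As the paper notes, it is not an essential restriction.
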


\begin{proof}
For each $i$, we use Algorithm \ref{alg:meanratio} to approximate $\alpha_i$ up to relative error $\epsilon / (2\ell)$, with failure probability $\gamma$, for some small constant $\gamma$. This would require $R$ reflections about the state $\ket{\pi_{\beta_i}}$, for some $R$ such that $R = O((\ell / \epsilon) \log^{3/2} (\ell/\epsilon) \log \log (\ell/\epsilon))$, and $O(\log(\ell/\epsilon) \log \log (\ell/\epsilon))$ copies of $\ket{\pi_{\beta_i}}$.

Instead of performing exact reflections and using exact copies of the states $\ket{\pi_i}$, we use approximate reflections and approximate copies of $\ket{\pi_i}$. By Theorem \ref{thm:reflect}, $O(\sqrt{\tau} \log (1/\epsilon_r))$ walk operations are sufficient to reflect about $\ket{\pi_i}$ up to an additive error term of order $\epsilon_r$. By Theorem \ref{thm:warmstart}, as we have a Chebyshev cooling schedule, a quantum state $\ket{\widetilde{\pi}_i}$ such that $\| \ket{\widetilde{\pi}_i} - \ket{\pi_i}\ket{0^b}\| \le \epsilon_s$ can be produced using $O(\ell \sqrt{\tau} \log^2 (\ell/\epsilon_s))$ steps of the quantum walks corresponding to the Markov chains $M_0,\dots,M_i$.

We choose $\epsilon_r = \gamma/R$, $\epsilon_s = \gamma$. Then the final state of Algorithm \ref{alg:meanratio} using approximate reflections and starting with the states $\ket{\widetilde{\pi}_i}$ rather than $\ket{\pi_i}$ can differ from the final state of an exact algorithm by at most $R \epsilon_r + \epsilon_s = 2\gamma$ in $\ell_2$ norm. This implies that the total variation distance between the output probability distributions of the exact and inexact algorithms is at most $2\gamma$, and hence by a union bound that the approximation is accurate up to relative error $\epsilon/(2\ell)$ except with probability $3\gamma$. For each $i$, we then take the median of $O(\log (\ell/\delta))$ estimates to achieve an estimate which is accurate up to relative error $\epsilon/(2\ell)$ except with probability at most $\delta/\ell$. By a union bound, all the estimates are accurate up to relative error $\epsilon/(2\ell)$ except with probability at most $\delta$, so their product is accurate to relative error $\epsilon$ except with probability at most $\delta$.

The total number of steps needed to produce all the copies of the states $\ket{\widetilde{\pi}_i}$ required is thus
\[ O(\ell \cdot \ell \sqrt{\tau} (\log^2 \ell) \cdot \log(\ell/\epsilon) \log \log (\ell/\epsilon) \cdot \log (\ell/\delta)) \]
and the total number of steps needed to perform the reflections is
\[ O(\ell \cdot \sqrt{\tau} (\log R) \cdot R \cdot \log (\ell/\delta) ).  \]
Adding the two, substituting the value of $R$, and using $\epsilon = O(1/\sqrt{\log \ell})$, we get an overall bound of
\[ O((\ell^2 \sqrt{\tau} / \epsilon) \log^{5/2} (\ell/\epsilon) \log (\ell/\delta) \log \log(\ell/\epsilon) ) = \widetilde{O}(\ell^2 \sqrt{\tau} / \epsilon) \]
as claimed.
\end{proof}

We remark that, in the above complexities, we have chosen to take the number of quantum walk steps used as our measure of complexity. This is to enable a straightforward comparison with the classical literature, which typically uses a random walk step as its elementary operation for the purposes of measuring complexity~\cite{stefankovic09}. To implement each quantum walk step efficiently and accurately, two possible approaches are to use efficient state preparation~\cite{chiang09} or recently developed approaches to efficient simulation of sparse Hamiltonians~\cite{berry15}.

Finally, we mention that one could also consider a more general setting for approximate sampling. Imagine that we would like to approximate the mean $\mu$ of some random variable chosen according to some distribution $\pi$, but only have access to samples from a distribution $\widetilde{\pi}$ that approximates $\pi$ (using some method which, for example, might not be a quantum walk). In this case, one can show that the estimation algorithm does not notice the difference between $\widetilde{\pi}$ and $\pi$ and hence allows efficient estimation of $\mu$. See Appendix~\ref{app:stability} for the details.


\subsection{Computing a Chebyshev cooling schedule}

We still need to show that, given a particular partition function, we can actually find a Chebyshev cooling schedule. For this we simply use a known classical result:

\begin{thm}[\v{S}tefankovi\v{c}, Vempala and Vigoda~\cite{stefankovic09}]
\label{thm:printschedule}
Let $Z$ be a partition function. Assume that for every inverse temperature $\beta$ we have a Markov chain $M_\beta$ with stationary distribution $\pi_\beta$ and relaxation time upper-bounded by $\tau$. Further assume that we can sample directly from $M_0$. Then, for any $\delta>0$ and any $B=O(1)$, we can produce a $B$-Chebyshev cooling schedule of length
\[ \ell = O(\sqrt{\log A} (\log n)(\log \log A)) \]
with probability at least $1-\delta$, using at most
\[ Q = O((\log A)((\log n) + \log \log A)^5 \tau \log(1/\delta)) \]
steps of the Markov chains.
\end{thm}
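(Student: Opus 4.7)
The plan is to follow the purely classical construction of \v{S}tefankovi\v{c}, Vempala and Vigoda \cite{stefankovic09} directly: the statement makes no use of quantum resources, and the quantum mean-estimation machinery developed earlier in the paper is not needed here. The starting observation, already used to derive equation \eqref{eq:chicheb}, is that the $B$-Chebyshev condition at step $i$ is equivalent to $\E_{\pi_i}[e^{-2(\beta_{i+1}-\beta_i)H}]\big/\bigl(\E_{\pi_i}[e^{-(\beta_{i+1}-\beta_i)H}]\bigr)^2 \le B$. The left-hand side depends only on the law of $H$ under $\pi_i$, equals $1$ at $\beta_{i+1}=\beta_i$, and increases monotonically in $\beta_{i+1}$; so it can be inverted by binary search using Monte Carlo estimates from samples drawn via $M_{\beta_i}$.

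I would build the schedule adaptively. Given the current $\beta_i$ together with approximate samples from $\pi_{\beta_i}$ carried over as a warm start from the previous round, I would binary-search over $O(\log n)$ candidate values of the jump $\beta_{i+1}-\beta_i$ for the largest choice that keeps the target ratio in a fixed constant window inside $(1,B)$, estimating the two expectations by sampling from $M_{\beta_i}$. Each sample costs $O(\tau\cdot \polylog)$ chain steps -- the warm start makes the initial-distribution factor benign and avoids any $\log(1/\pi_{\min})$ dependence -- and $\polylog(n,A)$ samples per binary-search level suffice to estimate the relevant ratio to constant multiplicative accuracy by Chebyshev's inequality, since both the numerator and denominator are bounded. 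Taking the median of $O(\log(\ell/\delta))$ independent runs per schedule point and applying a union bound controls the total failure probability to $\delta$.

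The main obstacle is establishing the length bound $\ell = O(\sqrt{\log A}\cdot\polylog)$; this is the key technical contribution of \cite{stefankovic09}, which I would import rather than redevelop. The conceptual point is that the Chebyshev condition constrains a \emph{second-order} discrete difference of $f(\beta):=\ln Z(\beta)$, namely $f(\beta_i+2\gamma_i)-2f(\beta_i+\gamma_i)+f(\beta_i)\le\ln B$ with $\gamma_i=\beta_{i+1}-\beta_i$; combined with the convexity, monotonicity, and global boundedness $f(0)-f(\infty)\le\ln A$ of $f$ on $[0,\infty)$, a reparameterisation of the schedule in terms of $-f'(\beta)=\E_{\pi_\beta}[H]$ together with a Cauchy--Schwarz-type argument extracts a square root, turning what would otherwise be a linear $O(\log A)$ telescoping bound into $O(\sqrt{\log A})$. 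Plugging this length bound into the per-point cost derived above, multiplying by the $\log(1/\delta)$ median-boosting factor, and absorbing the binary-search and estimation overhead into the $\polylog$ factor yields the claimed $Q = O((\log A)((\log n)+\log\log A)^5\,\tau\,\log(1/\delta))$.
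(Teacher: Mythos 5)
This theorem is imported by the paper verbatim from \cite{stefankovic09} with no proof given (``For this we simply use a known classical result''), so the relevant check is whether your sketch faithfully reconstructs that reference. Your high-level outline of the ingredients — an adaptively chosen schedule, warm starts keeping the per-sample cost at $O(\tau)$ rather than $O(\tau\log(1/\pi_{\min}))$, and importing the $\widetilde{O}(\sqrt{\log A})$ length bound as a black box — is in the right spirit, and your monotonicity observation (that the second difference $f(\beta+2\gamma)-2f(\beta+\gamma)+f(\beta)$ of the convex $f=\ln Z$ is nondecreasing in $\gamma\ge 0$) is correct.

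However, your final complexity accounting has a genuine gap of a factor $\sqrt{\log A}$. You multiply the \emph{output} schedule length $\ell=\widetilde{O}(\sqrt{\log A})$ by a per-point cost of $\widetilde{O}(\tau)\cdot\polylog$, which gives $Q=\widetilde{O}(\sqrt{\log A}\,\tau)$; the claimed bound is $Q=\widetilde{O}((\log A)\,\tau)$, and $\sqrt{\log A}$ is not absorbable as a $\polylog$ overhead. The missing piece is that the algorithm of \cite{stefankovic09} cannot simply propagate warm starts along the short adaptive schedule while it is still searching for it: it first fixes a \emph{nonadaptive} grid of $\widetilde{O}(\log A)$ inverse temperatures, and threads warm starts through that denser scaffolding so that it can sample near-stationarily at arbitrary intermediate $\beta$ during the search. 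Producing usable samples at each of those $\widetilde{O}(\log A)$ grid points is what costs $\widetilde{O}((\log A)\,\tau)$ and dominates $Q$. Montanaro spells out exactly this two-schedule structure in the discussion immediately after Corollary~\ref{cor:partitions}, precisely because it is the obstruction to a quantum speedup of the schedule-finding step (the warm-start states cannot be reused, by no-cloning). Your sketch, which warm-starts only along the $\widetilde{O}(\sqrt{\log A})$-length adaptive chain, silently assumes a better algorithm than the one the theorem actually describes.
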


We remark that a subsequent algorithm~\cite{huber12} improves the polylogarithmic terms and the hidden constant factors in the complexity. However, this algorithm assumes that we can efficiently generate independent samples from distributions approximating $\pi_\beta$ for arbitrary $\beta$. The most efficient general algorithm known~\cite{stefankovic09} for approximately sampling from arbitrary distributions $\pi_\beta$ uses ``warm starts'' and hence does not produce independent samples.

Combining all the ingredients, we have the following result:

\begin{cor}
\label{cor:partitions}
Let $Z$ be a partition function and let $\epsilon > 0$ be a desired precision such that $\epsilon = O(1/\sqrt{\log \log A})$. Assume that for every inverse temperature $\beta$ we have a Markov chain $M_\beta$ with stationary distribution $\pi_\beta$ and relaxation time upper-bounded by $\tau$. Further assume that we can sample directly from $M_0$. Then, for any $\delta > 0$, there is a quantum algorithm which uses
\begin{multline*}
O(((\log A) (\log^2 n)(\log \log A)^2 \sqrt{\tau} / \epsilon) \log^{5/2} ((\log A)/\epsilon) \log ((\log A)/\delta) \log \log((\log A)/\epsilon)\\
 + (\log A)((\log n) + \log \log A)^5 \tau \log(1/\delta)) )\\
 = \widetilde{O}((\log A) \sqrt{\tau} ( 1 / \epsilon + \sqrt{\tau}))
 \end{multline*}
steps of the $M_\beta$ chains and their corresponding quantum walk operations, and outputs $\widetilde{Z}$ such that
\[ \Pr[(1-\epsilon) Z(\infty) \le \widetilde{Z} \le (1+\epsilon)Z(\infty)] \ge 1-\delta. \]
\end{cor}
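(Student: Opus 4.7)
The plan is to simply bolt together the two main ingredients already developed in this section: Theorem \ref{thm:printschedule} (classical construction of a Chebyshev cooling schedule) and the preceding theorem that turns such a schedule into a quantum partition function estimator. Specifically, I would first invoke Theorem \ref{thm:printschedule} with parameters $B = O(1)$ and failure probability $\delta/2$, to produce a $B$-Chebyshev cooling schedule $0 = \beta_0 < \beta_1 < \dots < \beta_\ell = \infty$ with $\ell = O(\sqrt{\log A}(\log n)(\log\log A))$, using $O((\log A)((\log n) + \log \log A)^5 \tau \log(1/\delta))$ Markov chain steps. This contributes the second term in the claimed complexity.

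Next, I would feed this schedule into the previous theorem (with failure probability $\delta/2$) to produce the estimate $\widetilde{Z}$ using $O((\ell^2 \sqrt{\tau}/\epsilon)\log^{5/2}(\ell/\epsilon)\log(\ell/\delta)\log\log(\ell/\epsilon))$ quantum walk steps. Substituting $\ell^2 = O((\log A)(\log^2 n)(\log\log A)^2)$ and $\log\ell = O(\log\log A + \log\log n)$ converts this into the first term of the stated bound. Before applying the theorem I need to verify the hypothesis $\epsilon = O(1/\sqrt{\log \ell})$; since $\log \ell = O(\log\log A)$ (absorbing $\log\log n$ into $\log\log A$), the corollary's assumption $\epsilon = O(1/\sqrt{\log \log A})$ suffices.

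A union bound over the two failure events shows that, with probability at least $1 - \delta$, the cooling schedule is valid and the subsequent estimation produces $\widetilde{Z}$ with $(1-\epsilon) Z(\infty) \le \widetilde{Z} \le (1+\epsilon) Z(\infty)$. Adding the two step counts and simplifying using $\widetilde{O}$ notation (absorbing polylogarithmic factors in $\log A$, $\log n$, $\log\log A$, $\log(1/\epsilon)$ and $\log(1/\delta)$) yields the stated $\widetilde{O}((\log A)\sqrt{\tau}(1/\epsilon + \sqrt{\tau}))$ bound.

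There is essentially no substantive obstacle here: the work is pure bookkeeping of three kinds of polylogarithmic factors (those from Algorithm \ref{alg:meanratio}, those from the approximate reflection/preparation of Theorems \ref{thm:warmstart} and \ref{thm:reflect}, and those from Theorem \ref{thm:printschedule}). The only mildly delicate point is ensuring that the precondition $\epsilon = O(1/\sqrt{\log \ell})$ of the preceding theorem is implied by the corollary's precondition $\epsilon = O(1/\sqrt{\log\log A})$, which as noted above follows because $\ell$ is polylogarithmic in $A$ and $n$.
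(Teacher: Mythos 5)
Your proposal is correct and matches the paper's intended argument exactly: the paper simply states ``Combining all the ingredients, we have the following result'' before Corollary \ref{cor:partitions}, so the corollary's proof is implicitly the bookkeeping exercise you describe---invoke Theorem \ref{thm:printschedule} (with a constant $B$) to produce a $B$-Chebyshev cooling schedule of length $\ell = O(\sqrt{\log A}(\log n)(\log\log A))$, feed it into the preceding theorem, union-bound the two failure probabilities, and sum the two step counts. Your observation that the precondition $\epsilon = O(1/\sqrt{\log\ell})$ reduces to $\epsilon = O(1/\sqrt{\log\log A})$ because $\log\ell = O(\log\log A)$ (assuming, as the paper implicitly does, that $\log\log n = O(\log\log A)$, which holds in all the intended applications where $n$ is the Hamiltonian's integer range and $A$ is exponentially large in the problem size) is the one mildly delicate point, and you handle it correctly.
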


The best comparable classical result known is $\widetilde{O}((\log A) \tau / \epsilon^2)$~\cite{stefankovic09}. We therefore see that we have achieved a near-quadratic reduction in the complexity with respect to both $\tau$ and $\epsilon$, assuming that $\epsilon \le 1/\sqrt{\tau}$. Otherwise, we still achieve a near-quadratic reduction with respect to $\epsilon$.

Note that, if we could find a quantum algorithm that outputs a Chebyshev cooling schedule using $\widetilde{O}((\log A) \sqrt{\tau})$ steps of the Markov chains, Corollary \ref{cor:partitions} would be improved to a complexity of $\widetilde{O}((\log A) \sqrt{\tau}/\epsilon)$. It is instructive to note why this does not seem to be immediate. The classical algorithm for this problem~\cite{stefankovic09} needs to approximately sample from Markov chains $M_\beta$ for arbitrary values of $\beta$. To do this, it starts by fixing a nonadaptive Chebyshev cooling schedule $0 < \beta'_1 < \beta'_2 < \dots < \beta'_\ell = \infty$ of length $\ell = \widetilde{O}(\log A)$. When the algorithm wants to sample from $M_\beta$ with $\beta'_i < \beta < \beta'_{i+1}$, the algorithm uses an approximate sample from $M_{\beta'_i}$ as a ``warm start''. To produce one sample corresponding to each $\beta'_i$ value requires $\widetilde{O}(\ell \tau)$ samples, because each $M_{\beta'_i}$ also provides a warm start for $M_{\beta'_{i+1}}$. But, in the quantum case, this does not work because, by no-cloning, the states $\ket{\pi_{\beta'_i}}$ cannot be reused in this way to provide warm starts for multiple runs of the algorithm.


\subsection{Some partition function problems}
\label{sec:pfproblems}



In this section we describe some representative applications of our results to problems in statistical physics and computer science.

{\bf The ferromagnetic Ising model.} This well-studied statistical physics model is defined in terms of a graph $G = (V,E)$ by the Hamiltonian
\[ H(z) = -\sum_{(u,v) \in E} z_u z_v, \]
where $|V| = n$ and $z \in \{\pm 1\}^n$. A standard method to approximate the partition function of the Ising model uses the Glauber dynamics. This is a simple Markov chain with state space $\{\pm1\}^n$, each of whose transitions involves only updating individual sites, and whose stationary distribution is the Gibbs distribution
\[ \pi_\beta(z) = \frac{1}{Z(\beta)} e^{-\beta H(z)}. \]
This Markov chain, which has been intensively studied for decades, is known to mix rapidly in certain regimes~\cite{martinelli99}. Here we mention just one representative recent result:

\begin{thm}[Mossel and Sly~\cite{mossel13}]
\label{thm:glauber}
For any integer $d>2$, and inverse temperature $\beta>0$ such that $(d-1) \tanh \beta < 1$, the mixing time of the Glauber dynamics on any graph of maximum degree $d$ is $O(n \log n)$.
\end{thm}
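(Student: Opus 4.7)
The plan is to exploit the ``tree uniqueness'' interpretation of the condition $(d-1)\tanh\beta<1$: on the infinite $d$-regular tree $T_d$, $\pm$ boundary conditions imposed at depth $L$ induce root marginals whose total variation distance contracts by a factor $(d-1)\tanh\beta$ per level. The overall strategy is to transfer this spatial mixing to temporal mixing for the Glauber dynamics on an arbitrary graph $G$ of maximum degree $d$, with the sharper bound of $O(n\log n)$ (rather than merely polynomial) coming from a log-Sobolev style analysis.

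First I would work on $T_d$ and establish exponential decay of correlations by analysing the recursion for the root magnetisation as a function of the boundary condition at depth $L$. At each level the influence of a child on its parent is controlled by $\tanh\beta$, and combining the $d-1$ children at each internal node gives an overall per-level contraction of $(d-1)\tanh\beta<1$; iterating yields an exponential bound on the TV distance between the Gibbs measures induced by the $+$ and $-$ boundary conditions. I would then lift this to an arbitrary graph $G$ of maximum degree $d$ via Weitz's ``tree of self-avoiding walks'' construction, which represents the single-site marginal at a vertex $v$ in $G$ exactly as a marginal on a truncated tree with degrees at most $d$. Combined with the tree recursion, this yields \emph{strong spatial mixing}: for any two boundary conditions $\eta,\eta'$ differing only on a set at graph distance $\ge L$ from $v$, the root-marginal discrepancy is at most $C((d-1)\tanh\beta)^L$.

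For the temporal mixing step I would invoke the (now-standard) principle that uniform strong spatial mixing for the ferromagnetic Ising model on sub-boxes of $G$ implies a modified log-Sobolev inequality for the Glauber dynamics with constant $\Omega(1/n)$, which by the relation $t_{\mathrm{mix}} = O(\alpha^{-1}\log\log(1/\pi_{\min}))$ between the log-Sobolev constant $\alpha$ and mixing time gives the desired $O(n\log n)$ bound. A monotone-coupling route, closer in spirit to Mossel--Sly, would be to run the dynamics from all-$+$ and all-$-$ simultaneously and to show via censoring and block-dynamics decomposition that any residual ``disagreement cluster'' shrinks to zero in time $O(n\log n)$, with the tree-uniqueness condition providing the necessary exponential tail on the size of such clusters.

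The main obstacle is closing the gap between an easy $\poly(n)$ mixing bound and the sharp $O(n\log n)$. Plain path coupling with the Hamming metric only produces a contraction under the strictly stronger Dobrushin condition $d\tanh\beta<1$, because the trivial metric cannot distinguish the update site from its $d-1$ ``external'' neighbours; under the weaker condition $(d-1)\tanh\beta<1$ one needs either a judiciously weighted coupling metric that mirrors the tree geometry, or the full log-Sobolev / block-dynamics machinery to convert spatial to temporal mixing. This is the technically heaviest step and precisely where the Mossel--Sly argument makes its main innovation.
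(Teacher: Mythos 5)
The paper does not prove this statement: Theorem~\ref{thm:glauber} is imported verbatim from Mossel and Sly~\cite{mossel13} and used as a black box to instantiate the partition-function machinery on the Ising model. There is therefore no in-paper argument to compare your sketch against, and anything you write is a reconstruction of an external theorem rather than of this paper's reasoning.

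As a reconstruction, your outline correctly locates the landmarks (tree uniqueness from $(d-1)\tanh\beta<1$, the failure of Hamming-metric path coupling below the Dobrushin threshold $d\tanh\beta<1$, and the role of monotone coupling and censoring in Mossel--Sly) but it has a genuine gap at exactly the step you flag as ``technically heaviest.'' The ``now-standard principle'' you invoke --- that uniform strong spatial mixing implies a log-Sobolev constant $\Omega(1/n)$ and hence $O(n\log n)$ mixing --- is standard for amenable lattice-type graphs (Martinelli--Olivieri, Cesi, Stroock--Zegarlinski) but is \emph{not} available as a black box for arbitrary bounded-degree graphs; extending spatial-to-temporal mixing to that generality is precisely the content of the Mossel--Sly theorem, so invoking it is circular. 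Likewise, Weitz's self-avoiding-walk tree gives a deterministic marginal computation and correlation decay, but by itself does not yield a bound on the Glauber dynamics relaxation time on a general graph; some additional mechanism (in Mossel--Sly, a carefully engineered monotone grand coupling plus the Peres--Winkler censoring inequality, tracking a disagreement percolation process whose survival is controlled by the $(d-1)\tanh\beta<1$ contraction) is needed, and your sketch names this route but does not carry it out. In short: the ingredients are the right ones, but the bridge from spatial mixing on the tree to $O(n\log n)$ temporal mixing on a general graph is asserted rather than proved, and that bridge is the theorem.
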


(More precise results than Theorem \ref{thm:glauber} are known for certain specific graphs such as lattices~\cite{martinelli94}.) As we have $A = 2^n$, in the regime where $(d-1)\tanh \beta < 1$ the quantum algorithm approximates $Z(\beta)$ to within $\epsilon$ relative error in $\widetilde{O}(n^{3/2} / \epsilon + n^2)$ steps. The fastest known classical algorithm with rigorously proven performance bounds~\cite{stefankovic09} uses time $\widetilde{O}(n^2 / \epsilon^2)$. We remark that an alternative approach of Jerrum and Sinclair~\cite{jerrum93}, which is based on analysing a different Markov chain, gives a polynomial-time classical algorithm which works for any temperature, but is substantially slower.

{\bf Counting colourings.} Here we are given as input a graph $G$ with $n$ vertices and maximum degree $d$. We seek to approximately count the number of valid $k$-colourings of $G$, where a colouring of the vertices is valid if all pairs of neighbouring vertices are assigned different colours, and $k=O(1)$. In physics, this problem corresponds to the partition function of the Potts model evaluated at zero temperature. It is known that the Glauber dynamics for the Potts model mixes rapidly in some cases~\cite{frieze07}. One particularly clean result of this form is work of Jerrum~\cite{jerrum95} showing that this Markov chain mixes in time $O(n \log n)$ if $k > 2d$. As here $A = k^n$, we obtain a quantum algorithm approximating the number of colourings of $G$ up to relative error $\epsilon$ in $\widetilde{O}(n^{3/2}/ \epsilon + n^2)$ steps, as compared with the classical $\widetilde{O}(n^2 / \epsilon^2)$~\cite{stefankovic09}. 

{\bf Counting matchings.} A matching in a graph $G$ is a subset $M$ of the edges of $G$ such that no pair of edges in $M$ shares a vertex. In statistical physics, matchings are often known as monomer-dimer coverings~\cite{heilmann72}. To count the number of matchings, we consider the partition function
\[ Z(\beta) = \sum_{M \in \mathcal{M}} e^{-\beta |M|}, \]
where $\mathcal{M}$ is the set of matchings of $G$. We have $Z(0) = |\mathcal{M}|$, while $Z(\infty) = 1$, as in this case  the sum is zero everywhere except the empty matching ($0^0=1$). Therefore, in this case we seek to approximate $Z(0)$ using a telescoping product which starts with $Z(\infty)$. In terms of the cooling schedule $0 = \beta_0 < \beta_1 < \dots < \beta_\ell = \infty$, we have
\[ Z(\beta_0) = Z(\beta_\ell) \frac{Z(\beta_{\ell-1})}{Z(\beta_{\ell})} \frac{Z(\beta_{\ell-2})}{Z(\beta_{\ell-1})} \dots \frac{Z(\beta_0)}{Z(\beta_1)}. \]
As we have reversed our usage of the cooling schedule, rather than looking for it to be a $B$-Chebyshev cooling schedule we instead seek the bound
\[ \frac{Z(2\beta_i - \beta_{i+1})Z(\beta_{i+1})}{Z(\beta_i)^2} \le B \]
to hold for all $i = 0, \dots, \ell-1$. That is, the roles of $\beta_i$ and $\beta_{i+1}$ have been reversed as compared with (\ref{dfn:chebysched}). However, the classical algorithm for printing a cooling schedule can be modified to output a ``reversible'' schedule where this constraint is satisfied too, with only a logarithmic increase in complexity~\cite{stefankovic09}. In addition, it was shown by Jerrum and Sinclair~\cite{jerrum89,jerrum03} that, for any $\beta$, there is a simple Markov chain which has stationary distribution $\pi$, where
\[ \pi(M) = \frac{1}{Z(\beta)} \sum_{M \in \mathcal{M}} e^{-\beta |M|}, \]
and which has relaxation time $\tau = O(nm)$ on a graph with $n$ vertices and $m$ edges. Finally, in the setting of matchings, $A = O(n! 2^n)$. Putting these parameters together, we obtain a quantum complexity $\widetilde{O}(n^{3/2} m^{1/2}/\epsilon + n^2m)$, as compared with the lowest known classical bound $\widetilde{O}(n^2 m / \epsilon^2)$~\cite{stefankovic09}.


\section{Estimating the total variation distance}
\label{sec:tvd}

Here we give the technical details of our improvement of the accuracy of a quantum algorithm of Bravyi, Harrow and Hassidim~\cite{bravyi11a} for estimating the total variation distance between probability distributions. In this setting, we are given the ability to sample from probability distributions $p$ and $q$ on $n$ elements, and would like to estimate $\|p-q\| := \frac{1}{2} \|p-q\|_1 = \frac{1}{2} \sum_{x \in [n]} |p(x) - q(x)|$ up to additive error $\epsilon$. Classically, estimating $\|p-q\|$ up to error, say, $0.01$ cannot be achieved using $O(n^\alpha)$ samples for any $\alpha<1$~\cite{valiant11}, but in the quantum setting the dependence on $n$ can be improved quadratically:

\begin{thm}[Bravyi, Harrow and Hassidim~\cite{bravyi11a}]
Given the ability to sample from $p$ and $q$, there is a quantum algorithm which estimates $\|p-q\|$ up to additive error $\epsilon$, with probability of success $1-\delta$, using $O(\sqrt{n}/(\epsilon^8 \delta^5))$ samples.
\end{thm}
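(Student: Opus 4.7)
The plan is to rewrite the total variation distance as the expectation of a bounded random variable and combine classical Monte Carlo averaging with quantum amplitude estimation to evaluate each sample value. Concretely, I would use the identity
\[ \|p-q\| = \E_{x \sim r}[R(x)], \qquad r = \tfrac{1}{2}(p+q), \qquad R(x) = \frac{|p(x)-q(x)|}{p(x)+q(x)}, \]
which is straightforward to verify and has the crucial feature that $R(x) \in [0,1]$. Thus if I could evaluate $R$ exactly on each sample, a Chernoff bound would give an additive $\epsilon$ estimate of $\|p-q\|$ from $k = O(\log(1/\delta)/\epsilon^2)$ samples of $x$ from $r$.

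The main work is evaluating $R(x)$ quantumly. I will use the sampling oracles $U_p, U_q$ (unitaries preparing $\sum_y \sqrt{p(y)}\ket{y}$ and the analogous state for $q$) together with amplitude estimation against the projector $\proj{x}$ to estimate $p(x)$ and $q(x)$ for the fixed sampled $x$. Amplitude estimation yields an estimate $\widetilde{p}(x)$ with additive error roughly $\sqrt{p(x)}/t + 1/t^2$ in $t$ calls to $U_p$. To control how the errors propagate through $R$, I introduce a truncation threshold $\tau = \Theta(\epsilon/n)$: for configurations with $r(x) < \tau$ I simply output $0$ in place of $R(x)$. The total bias this introduces is at most $\sum_{x:r(x)<\tau} r(x) \cdot R(x) \le n\tau = O(\epsilon)$, which is absorbable.

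For the remaining $x$ we have $p(x)+q(x) \ge 2\tau$, so $\max(p(x),q(x)) \ge \tau$, and amplitude estimation with $t = \widetilde{O}(\sqrt{n/\epsilon}/\epsilon')$ calls produces estimates $\widetilde{p}(x), \widetilde{q}(x)$ with multiplicative error $\epsilon'$ in each factor of the ratio. For a sufficiently small polynomial choice of $\epsilon'$ (in both $\epsilon$ and $\delta$) one obtains an approximation $\widetilde{R}(x)$ satisfying $|\widetilde{R}(x) - R(x)| \le \epsilon$ except with some small probability, which can be boosted to high success by the standard median trick (Lemma \ref{lem:powering}). Averaging $\widetilde{R}(x_1),\dots,\widetilde{R}(x_k)$ over the $k = \widetilde{O}(1/\epsilon^2)$ samples and applying a union bound over all samples yields the claimed estimator, with total oracle complexity $\widetilde{O}(\sqrt{n}/\epsilon^{O(1)})$ dependent on how tightly the error analysis is carried out.

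The main obstacle is the error analysis: $R(x)$ is sensitive both to the small denominator $p(x)+q(x)$ and to cancellation in the numerator $|p(x)-q(x)|$, so propagating the amplitude-estimation errors into a clean bound on $|\widetilde{R}(x) - R(x)|$ requires setting $\epsilon'$ significantly smaller than $\epsilon$, and the price one pays here dictates the precise exponents on $\epsilon$ and $\delta$. The $1/(\epsilon^8 \delta^5)$ factors in the claimed bound almost certainly come from this step together with the need to make the per-sample failure probability inverse-polynomially small in $\epsilon$ so that the union bound over $k$ Monte Carlo samples still succeeds with probability $1-\delta$. The structure of the argument, however, is quite transparent; it is refining the $\epsilon$ and $\delta$ exponents that absorbs most of the work, and indeed the whole point of the improvement sketched at the end of Section~\ref{sec:tvd} is to collapse these factors by replacing the classical outer average with Algorithm \ref{alg:meanbounded}.
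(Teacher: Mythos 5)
The paper does not actually reprove this theorem --- it is quoted from \cite{bravyi11a} as a point of comparison, and the paper's own contribution in Section~\ref{sec:tvd} is an \emph{improvement} of it. That said, your sketch correctly reconstructs the structure that both \cite{bravyi11a} and the paper's improved algorithm share at the subroutine level: the identity $\|p-q\| = \E_{x\sim r}[R(x)]$ with $R(x)\in[0,1]$; estimating $p(x),q(x)$ by amplitude estimation with $t = \Theta(\sqrt{n/\epsilon})$ queries so that the additive error $\sqrt{p(x)}/t + 1/t^2$ becomes relative error $\eta(p(x)+q(x))$ once $r(x)\gtrsim\epsilon/n$; bounding the bias from the ``small'' $x$ with $r(x)\lesssim\epsilon/n$ by $O(\epsilon)$; and propagating the relative error through the ratio (the paper invokes Proposition~\ref{prop:ratioest} from \cite{bravyi11a} for exactly this). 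Your outer loop is the classical average over $k=O(\log(1/\delta)/\epsilon^2)$ draws from $r$, which is the BHH choice, and you correctly identify that the paper's improvement is obtained precisely by replacing this outer loop with Algorithm~\ref{alg:meanbounded} via Theorem~\ref{thm:meanerr}, dropping $1/\epsilon^2$ to $\widetilde O(1/\epsilon)$ and getting $\log(1/\delta)$ scaling from the powering lemma. One small technical point: your proposal literally truncates the subroutine by ``outputting $0$ when $r(x)<\tau$'', but the algorithm cannot test this condition exactly; the cleaner route, which the paper takes, is to leave the subroutine (Algorithm~\ref{alg:l1dist}) unchanged and handle the small-$r(x)$ contribution purely in the \emph{analysis}, using $0\le d(x),\widetilde d(x)\le1$ to bound $\sum_{x:\,r(x)\le\epsilon/n} r(x)\,\E[|\widetilde d(x)-d(x)|]\le\epsilon$. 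As for the $1/(\epsilon^8\delta^5)$ exponents: a careful version of your argument would in fact give something substantially better (roughly $\widetilde O(\sqrt{n}/\epsilon^{7/2})$), so those exponents reflect looseness in the original \cite{bravyi11a} bookkeeping rather than an obstruction you have overlooked; you need not feel obligated to reproduce them exactly.
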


Here we will use Theorem \ref{thm:meanerr} to improve the dependence on $\epsilon$ and $\delta$ of this algorithm. We will approximate the mean output value of the following algorithm, which was a subroutine previously used in~\cite{bravyi11a}.

\boxalgm{alg:l1dist}{Subroutine for estimating the total variation distance}{
Let $p$ and $q$ be probability distributions on $n$ elements and let $r = (p+q)/2$.
\begin{enumerate}
\item Draw a sample $x \in [n]$ according to $r$.
\item Use amplitude estimation with $t$ queries, for some $t$ to be determined, to obtain estimates $\widetilde{p}(x)$, $\widetilde{q}(x)$ of the probability of obtaining outcome $x$ under distributions $p$ and $q$.
\item Output $|\widetilde{p}(x) - \widetilde{q}(x)| / (\widetilde{p}(x) + \widetilde{q}(x))$.
\end{enumerate}
}

If the estimates $\widetilde{p}(x)$, $\widetilde{q}(x)$ were precisely accurate, the expected output of the subroutine would be
\[ E := \sum_{x \in [n]} \left( \frac{p(x) + q(x)}{2}\right) \frac{ | p(x) - q(x) |}{p(x) + q(x)} = \frac{1}{2} \sum_{x \in [n]} | p(x) - q(x) | = \|p-q\|. \]
We now bound how far the expected output $\widetilde{E}$ of the algorithm is from this exact value. By linearity of expectation,
\[ | \widetilde{E} - E | = \left| \sum_{x \in [n]} r(x) \E[ \widetilde{d}(x) - d(x)]\right| \le \sum_{x \in [n]} r(x) \E[|\widetilde{d}(x) - d(x)|] \]
where $d(x) = |p(x) - q(x)|/(p(x)+q(x))$, $\widetilde{d}(x) = |\widetilde{p}(x) - \widetilde{q}(x)|/(\widetilde{p}(x)+\widetilde{q}(x))$. Note that $\widetilde{d}(x)$ is a random variable. Split $[n]$ into ``small'' and ``large'' parts according to whether $r(x) \le \epsilon / n$. Then 
\beas
| \widetilde{E} - E | &\le& \sum_{x, r(x)\le \epsilon / n} r(x) \E [|\widetilde{d}(x) - d(x)|] + \sum_{x, r(x)\ge \epsilon / n} r(x) \E[|\widetilde{d}(x) - d(x)|] \\
&\le& \epsilon + \sum_{x, r(x)\ge \epsilon / n} r(x) \E[|\widetilde{d}(x) - d(x)|]
\eeas
using that $0 \le d(x), \widetilde{d}(x) \le 1$. From Theorem \ref{thm:ampest}, for any $\delta > 0$ we have
\[ |\widetilde{p}(x) - p(x)| \le 2 \pi \frac{\sqrt{p(x)}}{t} + \frac{\pi^2}{t^2} \]
except with probability at most $\delta$, using $O(t \log 1/\delta)$ samples from $p$. If $t \ge 4\pi / (\eta \sqrt{p(x)+q(x)})$ for some $0 \le \eta \le 1$, this implies that
\[ |\widetilde{p}(x) - p(x)| \le \frac{2\pi \eta \sqrt{p(x)} \sqrt{p(x) + q(x)}}{4 \pi} + \frac{\pi^2\eta^2 (p(x) + q(x))}{16\pi^2} \le \eta(p(x)+q(x)) \]
except with probability at most $\delta$. A similar claim also holds for $|\widetilde{q}(x) - q(x)|$. We now use the following technical result from~\cite{bravyi11a}:

\begin{prop}
\label{prop:ratioest}
Consider a real-valued function $f(p,q) = (p-q)/(p+q)$ where $0 \le p,q \le 1$. Assume that $|p - \widetilde{p}|, |q-\widetilde{q}| \le \eta(p+q)$ for some $\eta \le 1/5$. Then $|f(p,q) - f(\widetilde{p},\widetilde{q})| \le 5\eta$.
\end{prop}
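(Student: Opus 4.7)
The plan is to reduce the claim to elementary algebra on $f(p,q) = (p-q)/(p+q)$. Introduce the abbreviations $S = p + q$, $D = p - q$, $\epsilon_p = \widetilde{p} - p$, $\epsilon_q = \widetilde{q} - q$, so by hypothesis $|\epsilon_p|, |\epsilon_q| \le \eta S$. Then $\widetilde{p} + \widetilde{q} = S + (\epsilon_p + \epsilon_q)$ and $\widetilde{p} - \widetilde{q} = D + (\epsilon_p - \epsilon_q)$. Clearing denominators in the difference $f(\widetilde{p},\widetilde{q}) - f(p,q)$ gives the clean identity
\[
  f(\widetilde{p},\widetilde{q}) - f(p,q) \;=\; \frac{S(\epsilon_p - \epsilon_q) \;-\; D(\epsilon_p + \epsilon_q)}{S\bigl(S + \epsilon_p + \epsilon_q\bigr)}.
\]

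For the numerator, the triangle inequality and the bound $|D| \le S$ yield
\[
  \bigl|S(\epsilon_p - \epsilon_q) - D(\epsilon_p + \epsilon_q)\bigr| \;\le\; S\bigl(|\epsilon_p - \epsilon_q| + |\epsilon_p + \epsilon_q|\bigr).
\]
The key observation is the identity $|\epsilon_p - \epsilon_q| + |\epsilon_p + \epsilon_q| = 2\max(|\epsilon_p|,|\epsilon_q|) \le 2\eta S$, which saves a factor of $2$ over the cruder bound obtained by treating the two terms separately and is what makes the stated constant $5$ go through (the crude bound would only give roughly $4\eta/(1-2\eta)$). Hence the numerator is at most $2\eta S^2$.

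For the denominator, $|\epsilon_p + \epsilon_q| \le 2\eta S$, so $S + \epsilon_p + \epsilon_q \ge S(1-2\eta)$, which is positive since $\eta \le 1/5$. Combining,
\[
  \bigl|f(\widetilde{p},\widetilde{q}) - f(p,q)\bigr| \;\le\; \frac{2\eta S^2}{S^2(1-2\eta)} \;=\; \frac{2\eta}{1-2\eta}.
\]
Finally, $\eta \le 1/5$ gives $1-2\eta \ge 3/5$, so $2\eta/(1-2\eta) \le 10\eta/3 \le 5\eta$, as claimed. The only mildly subtle step is noticing the $|\epsilon_p - \epsilon_q| + |\epsilon_p + \epsilon_q| = 2\max(|\epsilon_p|,|\epsilon_q|)$ identity; without it the constant is off by a factor of roughly $2$. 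Everything else is routine.
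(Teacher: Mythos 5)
Your proof is correct, and the algebra checks out: the numerator identity, the key observation that $|\epsilon_p - \epsilon_q| + |\epsilon_p + \epsilon_q| = 2\max(|\epsilon_p|,|\epsilon_q|)$, the lower bound $S + \epsilon_p + \epsilon_q \ge (1-2\eta)S > 0$, and the final arithmetic $\frac{2\eta}{1-2\eta} \le \frac{10\eta}{3} \le 5\eta$ are all sound. Note, however, that the paper does not prove this proposition at all: it is stated as an imported technical result from Bravyi, Harrow and Hassidim, so there is no in-paper proof to compare against.

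One small remark on your argument: the detour through $|a-b|+|a+b|=2\max(|a|,|b|)$ is clever but not needed. Expanding directly, $S(\epsilon_p - \epsilon_q) - D(\epsilon_p + \epsilon_q) = (S-D)\epsilon_p - (S+D)\epsilon_q = 2q\,\epsilon_p - 2p\,\epsilon_q$, and then immediately $|2q\,\epsilon_p - 2p\,\epsilon_q| \le 2q\eta S + 2p\eta S = 2\eta S^2$, giving the same numerator bound with fewer moving parts and making the role of the individual bounds $|\epsilon_p|,|\epsilon_q| \le \eta S$ transparent. Either way the constant $5$ is comfortably achieved.
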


By Proposition \ref{prop:ratioest}, for all $x$ such that $t \ge 4\pi / (\eta \sqrt{p(x)+q(x)})$ we have $|\widetilde{d}(x)-d(x)| \le 5\eta$, except with probability at most $2\delta$. We now fix $t = \lceil 10 \sqrt{2} \pi \sqrt{n} \epsilon^{-3/2} \rceil$. Then, for all $x$ such that $p(x) + q(x) \ge 2\epsilon/n$, $|\widetilde{d}(x)-d(x)| \le \epsilon$ except with probability at most $2\delta$. Thus, for all $x$ such that $r(x) \ge \epsilon/n$,
\[ \E[|\widetilde{d}(x)-d(x)|] \le 2\delta + (1-2\delta)\epsilon \le 2\delta + \epsilon. \]
Taking $\delta = \epsilon$, we have
\[ | \widetilde{E} - E | \le 4 \epsilon \]
for any $\epsilon$, using $O(\sqrt{n} \epsilon^{-3/2} \log(1/\epsilon))$ samples. It therefore suffices to use $O(\sqrt{n} \epsilon^{-3/2} \log(1/\epsilon))$ samples to achieve $| \widetilde{E} - E | \le \epsilon/2$. As the output of this subroutine is bounded between 0 and 1, to approximate $\widetilde{E}$ up to additive error $\epsilon/2$ with failure probability $\delta$, it suffices to use the subroutine $O((1/\epsilon) \log (1/\delta))$ times by Theorem \ref{thm:meanerr}. So the overall complexity is $O((\sqrt{n} / \epsilon^{5/2}) \log(1/\epsilon) \log(1/\delta))$. For small $\epsilon$ and $\delta$ this is a substantial improvement on the $O(\sqrt{n} / (\epsilon^8 \delta^5))$ complexity stated by Bravyi, Harrow and Hassidim~\cite{bravyi11a}.


\subsection*{Acknowledgements}

This work was supported by the UK EPSRC under Early Career Fellowship EP/L021005/1. I would like to thank Aram Harrow for helpful conversations and pointing out references, and Daniel Lidar for supplying further references. I would also like to thank several anonymous referees for their helpful comments. Special thanks to Tongyang Li for pointing out an error in Section \ref{sec:tvd}.


\appendix

\section{Stability of Algorithm \ref{alg:meanvariance}}
\label{app:stability}

It is often the case that one wishes to estimate some quantity of interest defined in terms of samples from some probability distribution $\pi$, but can only sample from a distribution $\widetilde{\pi}$ which is close to $\pi$ in total variation distance (for example, using Markov chain Monte Carlo methods). We now show that, if Algorithm \ref{alg:meanvariance} is given access to samples from $\widetilde{\pi}$ rather than $\pi$, it does not notice the difference. We will need the following claim.

\begin{claim}
\label{claim:arcsin}
For any $x,y \in [0,1]$,
\[ |\arcsin x - \arcsin y | \le \frac{\pi}{2} \sqrt{|x^2-y^2|}. \]
\end{claim}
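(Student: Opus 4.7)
Proof plan:

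The plan is to reduce the claim to a pair of standard trigonometric facts by substituting $\alpha = \arcsin x$, $\beta = \arcsin y$. Without loss of generality assume $x \ge y$, so both $\alpha, \beta$ lie in $[0,\pi/2]$ with $\alpha \ge \beta$, and the inequality to prove becomes
\[ \alpha - \beta \;\le\; \frac{\pi}{2}\sqrt{\sin^2\alpha - \sin^2\beta}. \]

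First I would rewrite the right-hand side using the product-to-sum identity
\[ \sin^2\alpha - \sin^2\beta = \sin(\alpha+\beta)\sin(\alpha-\beta), \]
which turns the target into $(\alpha-\beta)^2 \le (\pi/2)^2 \sin(\alpha+\beta)\sin(\alpha-\beta)$. Next I would drop the $\sin(\alpha+\beta)$ factor by showing $\sin(\alpha+\beta) \ge \sin(\alpha-\beta)$. Here $\alpha+\beta \in [0,\pi]$ and $\alpha-\beta \in [0,\pi/2]$; since $\sin$ on $[0,\pi]$ is symmetric about $\pi/2$, the inequality reduces to checking that $\min(\alpha+\beta,\, \pi - (\alpha+\beta)) \ge \alpha - \beta$, which holds because $\alpha+\beta \ge \alpha-\beta$ (as $\beta \ge 0$) and $\pi - (\alpha+\beta) \ge \alpha - \beta$ is equivalent to $2\alpha \le \pi$, true since $\alpha \le \pi/2$.

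With that reduction in hand, it suffices to show $(\alpha - \beta)^2 \le (\pi/2)^2 \sin^2(\alpha-\beta)$, i.e.\ $\sin\theta \ge (2/\pi)\theta$ for $\theta = \alpha - \beta \in [0,\pi/2]$. This is Jordan's inequality, an immediate consequence of the concavity of $\sin$ on $[0,\pi/2]$ combined with the endpoint values $\sin 0 = 0$, $\sin(\pi/2) = 1$. Squaring, taking square roots and chaining the inequalities then yields the claim.

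The argument is short and elementary; there is no real obstacle, but the one place to be careful is the direction of the trigonometric inequality $\sin(\alpha+\beta) \ge \sin(\alpha-\beta)$, since $\alpha+\beta$ may exceed $\pi/2$ and the naive monotonicity argument breaks. The symmetry-about-$\pi/2$ observation above handles this cleanly.
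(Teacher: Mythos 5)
Your proof is correct, and it takes a genuinely different route from the paper's. The paper applies the arcsin subtraction formula to write $\arcsin x - \arcsin y = \arcsin\bigl(x\sqrt{1-y^2} - y\sqrt{1-x^2}\bigr)$, then bounds the arcsin by Jordan's inequality $\arcsin z \le (\pi/2)z$, and finally invokes the elementary estimate $|a-b| \le \sqrt{|a^2-b^2|}$ (for $a,b\ge 0$) to collapse $\bigl|\sqrt{x^2(1-y^2)} - \sqrt{y^2(1-x^2)}\bigr|$ down to $\sqrt{|x^2-y^2|}$. You instead substitute $\alpha = \arcsin x$, $\beta = \arcsin y$, factor via $\sin^2\alpha - \sin^2\beta = \sin(\alpha+\beta)\sin(\alpha-\beta)$, and observe that the symmetry of $\sin$ about $\pi/2$ gives $\sin(\alpha+\beta) \ge \sin(\alpha-\beta)$ for $\alpha,\beta \in [0,\pi/2]$ with $\alpha\ge\beta$; this reduces the claim directly to Jordan's inequality applied to $\alpha-\beta$. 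Both proofs hinge on the same key estimate $\sin\theta \ge (2/\pi)\theta$, but your factorization avoids the arcsin addition formula (whose validity domain requires a moment's care) and replaces the paper's auxiliary square-root inequality with a cleaner monotonicity argument; the paper's version has the mild advantage of avoiding any case analysis on $\alpha+\beta$ versus $\pi/2$. Both are equally short and elementary.
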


\begin{proof}
We use a standard addition formula for arcsin to obtain
\beas
|\arcsin x - \arcsin y | &=& | \arcsin(x \sqrt{1-y^2} - y\sqrt{1-x^2})|\\
&\le& \frac{\pi}{2} |\sqrt{x^2(1-y^2)} - \sqrt{y^2(1-x^2)}|\\
&\le& \frac{\pi}{2} \sqrt{|x^2 - y^2|},
\eeas
where the first inequality is $\sin \theta \ge (2/\pi) \theta$ for all $\theta \in [0,\pi/2]$, and the second inequality is
\[ |a-b| \le \sqrt{|a-b|(a+b)} = \sqrt{|a^2-b^2|}, \]
valid for all non-negative $a$ and $b$.
\end{proof}

\begin{lem}
\label{lem:indist}
Let $\mathcal{A}$ and $\mathcal{B}$ be algorithms with distributions $\mathcal{D}_{\mathcal{A}}$ and $\mathcal{D}_{\mathcal{B}}$ on their output values, such that $\| \mathcal{D}_{\mathcal{A}} - \mathcal{D}_{\mathcal{B}}\| \le \gamma$, for some $\gamma$. Assume that Algorithm \ref{alg:meanvariance} is applied to $\mathcal{A}$, and uses the operator $U = 2 \proj{\psi}-I$ $T$ times, where $\ket{\psi} = \mathcal{A}\ket{0}$. Then the algorithm estimates $\E[v(\mathcal{B})]$ up to additive error $\epsilon$ except with probability at most $3/10 + \frac{\pi^2}{\sqrt{6}} T \sqrt{\gamma}$.
\end{lem}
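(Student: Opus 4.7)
My plan is to couple the actual execution of Algorithm~\ref{alg:meanvariance} applied to $\mathcal{A}$ with a hypothetical execution obtained by replacing $\mathcal{A}$ with $\mathcal{B}$ throughout (using $\ket{\psi'}=\mathcal{B}\ket{0}$ and $U'=2\proj{\psi'}-I$ in place of $\ket{\psi}$ and $U$), and to bound the total variation distance between the two executions' output distributions. With the parameters of Algorithm~\ref{alg:meanvariance} tightened so that each of its three internal failure events (Chebyshev concentration of the initial sample $\widetilde m$, and the two invocations of Algorithm~\ref{alg:meansub}) has probability at most $1/10$, the hypothetical run estimates $\E[v(\mathcal{B})]$ within $\epsilon$ with probability at least $7/10$. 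Since $|\Pr_{\mathcal{A}}[E]-\Pr_{\mathcal{B}}[E]|\le d_{\mathrm{TV}}$ for every event $E$, it then suffices to bound the TV distance between the two output distributions by $\frac{\pi^2}{\sqrt{6}}T\sqrt{\gamma}$.

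To bound this TV distance, I would first couple the initial classical sample $\widetilde m$ between the two runs so that it takes the same value; this coupling fails with probability at most $\gamma$, since the output distributions of $\mathcal{A}'$ and $\mathcal{B}'$ are at TV distance $\gamma$. Conditional on the coupling succeeding, I would decompose the remaining quantum work into its amplitude-estimation subcalls, letting the $i$-th subcall use $t_i$ iterations of $U$, with $\sum_i t_i=T$. The $i$-th subcall estimates an amplitude $a_i^{\mathcal{A}}=\E[v(\mathcal{A}_{(i)})]$ under the actual run and $a_i^{\mathcal{B}}=\E[v(\mathcal{B}_{(i)})]$ under the hypothetical run, where $\mathcal{A}_{(i)},\mathcal{B}_{(i)}$ are the rescaled, truncated subroutines (with outputs in $[0,1]$) constructed inside Algorithms~\ref{alg:meanvariance} and~\ref{alg:meansub}. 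Since total variation distance is monotone under post-processing, $|a_i^{\mathcal{A}}-a_i^{\mathcal{B}}|\le\gamma$ for every $i$. Claim~\ref{claim:arcsin} applied to the Grover rotation angles $\theta_i=\arcsin\sqrt{a_i}$ then yields $|\theta_i^{\mathcal{A}}-\theta_i^{\mathcal{B}}|\le\tfrac{\pi}{2}\sqrt{\gamma}$, i.e.\ the eigenphases of the Grover iterate shift by at most $\pi\sqrt{\gamma}$.

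From here I would bound the TV distance between the $i$-th subcall's output distributions using the fact that amplitude estimation is phase estimation of the Grover iterate with $t_i$ repetitions: a shift of $\pi\sqrt{\gamma}$ in the eigenphase propagates to an $O(t_i\sqrt{\gamma})$ shift in the TV distance of the peaked phase-estimation output distribution. Summing over all subcalls, adding the $\gamma$ contribution from coupling $\widetilde m$, and carefully combining the phase-estimation error expressions (where the identity $\sum_{k=1}^\infty 1/k^2=\pi^2/6$ enters the analysis through the standard closed form for the phase-estimation failure probabilities) yields the advertised constant $\pi^2/\sqrt{6}$. The main obstacle is securing the $\sqrt{\gamma}$ dependence rather than $\gamma$, which is forced by the $\arcsin\sqrt{\cdot}$ nonlinearity inherent to amplitude estimation and is made quantitative by Claim~\ref{claim:arcsin}; without it one would only recover a bound that scales linearly in $\gamma$ per iteration, producing too weak an overall guarantee. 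A secondary technicality is verifying that the uniform per-subcall bound $|a_i^{\mathcal{A}}-a_i^{\mathcal{B}}|\le\gamma$ genuinely survives the nested truncation and rescaling operations inside Algorithms~\ref{alg:meanvariance} and~\ref{alg:meansub}, which reduces to invoking the data-processing inequality for total variation distance at each composition step.
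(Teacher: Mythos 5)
Your overall strategy is essentially the same as the paper's: couple the initial classical sample $\widetilde m$, view each amplitude-estimation subcall as producing a distribution that depends only on the amplitude being estimated, use Claim~\ref{claim:arcsin} to convert the $O(\gamma)$ shift in the amplitude into an $O(\sqrt{\gamma})$ shift in the Grover rotation angle, bound the resulting shift in the subcall's output distribution, and union bound over the subcalls together with the $3/10$ baseline failure probability. This matches the paper's proof structure.

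Two points deserve correction, though. First, your explanation of where $\pi^2/\sqrt{6}$ comes from is wrong: the Basel sum $\sum_k 1/k^2 = \pi^2/6$ plays no role. In the paper, the constant arises as a product of three elementary factors: a $\pi/2$ from Claim~\ref{claim:arcsin}; a $\pi/\sqrt{3}$ from bounding $1 - |\ip{\mathcal{S}_t(\omega_A)}{\mathcal{S}_t(\omega_B)}|^2 = 1 - \frac{\sin^2(\pi t \Delta)}{t^2\sin^2(\pi\Delta)}$ via $\theta - \theta^3/6 \le \sin\theta \le \theta$; and a $\sqrt{2}$ from the relation $\sum_x |\Pr[v(\mathcal{A})=x]-\Pr[v(\mathcal{B})=x]| = 2\|\mathcal{D}_\mathcal{A}-\mathcal{D}_\mathcal{B}\| \le 2\gamma$. (Your tighter bound $|a_i^{\mathcal{A}}-a_i^{\mathcal{B}}|\le\gamma$, which uses that the post-processed output is $[0,1]$-valued, is also valid and would give the slightly better constant $\pi^2/\sqrt{12}$, which still implies the lemma.) Second, the step where you assert that a phase shift of $\pi\sqrt{\gamma}$ "propagates to an $O(t_i\sqrt{\gamma})$ shift in the TV distance" is the crux of the whole bound and cannot be left at that level of hand-waving: you need the explicit fidelity identity for phase-estimation states, namely $\|\mathcal{M}_\mathcal{A}-\mathcal{M}_\mathcal{B}\|^2 \le 1 - |\ip{\mathcal{S}_t(\omega_A)}{\mathcal{S}_t(\omega_B)}|^2$ together with $|\ip{\mathcal{S}_t(\omega_A)}{\mathcal{S}_t(\omega_B)}|^2 = \sin^2(\pi t\Delta)/(t^2\sin^2(\pi\Delta))$ from~\cite{brassard02}, followed by the trigonometric inequalities above. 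Without those details the $O(\cdot)$ hides exactly the constant you are trying to pin down.

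Also worth flagging: to recover the stated $3/10$ you must tighten the internal failure probabilities of Algorithm~\ref{alg:meanvariance} (Chebyshev plus two calls to Algorithm~\ref{alg:meansub}) to $1/10$ each rather than the $1/9$ used in the main theorem, which you correctly anticipated but should state explicitly as a reparameterisation.
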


Lemma \ref{lem:indist} is reminiscent of the hybrid argument for proving lower bounds on quantum query complexity~\cite{bennett97}: if the distributions $\mathcal{D}_{\mathcal{A}}$ and $\mathcal{D}_{\mathcal{B}}$ are close, and the amplitude amplification algorithm makes few queries, it cannot distinguish them. However, here the quantifiers appear in a different order: whereas in the setting of lower-bounding quantum query complexity we wish to show that there exist pairs of distributions which are indistinguishable by any possible algorithm, here we wish to show that one fixed algorithm cannot distinguish any pair of close distributions.

Also note that Wocjan et al.~\cite{wocjan09} proved a similar result in the setting where we are given access to an approximate rotation $\widetilde{U} \approx U$. However, the result here is more general, in that we do not assume that $\ket{\phi} = \mathcal{B}\ket{0}$ is close to $\ket{\psi}$, but merely that the measured probability distributions are close.

\begin{proof}
We first use the calculations for the output probabilities of the amplitude estimation algorithm from~\cite{brassard02} when applied as in Theorem \ref{thm:meanerr} with $t$ queries to an algorithm with mean output value $\mu_A$, and another with mean output value $\mu_B$.

For $x,y \in \R$, define $d(x,y) = \min_{z \in \Z} |z + x - y |$. $2 \pi d(x,y)$ is the length of the shortest arc on the unit circle between $e^{2 \pi i x}$ and $e^{2 \pi i y}$. Let $\omega_A$ and $\omega_B$ be defined by $\sin^2 \omega_A = \mu_A$, $\sin^2 \omega_B = \mu_B$, and set $\Delta = d(\omega_A,\omega_B)$. Finally, let $\mathcal{M}_\mathcal{A}$ and $\mathcal{M}_\mathcal{B}$ be the distributions over the measurement outcomes when amplitude estimation is applied to estimate $\mu_A$, $\mu_B$.

The distribution on the measurement outcomes of the amplitude estimation algorithm after $t$ uses of the input operator, when applied to a phase of $\omega$, is equivalent~\cite{brassard02} to that obtained by measuring the state
\[ \ket{\mathcal{S}_t(\omega)} := \frac{1}{\sqrt{t}} \sum_{y \in [t]} e^{2\pi i \omega y} \ket{y}, \]
so the total variation distance between the distributions $\mathcal{M}_\mathcal{A}$ and $\mathcal{M}_\mathcal{B}$ obeys the bound
\[ \|\mathcal{M}_\mathcal{A} - \mathcal{M}_\mathcal{B}\|^2 \le 1 - |\ip{\mathcal{S}_t(\omega_A)}{\mathcal{S}_t(\omega_B)}|^2 = 1 - \frac{\sin^2(\pi t \Delta)}{t^2 \sin^2(\pi \Delta)}, \]
where the first equality is standard~\cite{nielsen00} and the second equality is~\cite[Lemma 10]{brassard02}. Using the inequalities
\[ \theta - \frac{\theta^3}{6} \le \sin \theta \le \theta, \]
valid for $\theta \ge 0$, we obtain
\[ \|\mathcal{M}_\mathcal{A} - \mathcal{M}_\mathcal{B}\|^2 \le 1 - \left( \frac{\pi t \Delta - (\pi t \Delta)^3 / 6}{t \pi \Delta}\right)^2 = 1 - \left(1 - \frac{(\pi t \Delta)^2}{6} \right)^2 \le \frac{(\pi t \Delta)^2}{3}. \]
As we have
\[ \Delta  = \min_{z \in \Z} |z + \omega_A - \omega_B | \le |\omega_A - \omega_B| \le \frac{\pi}{2} \sqrt{|\mu_A - \mu_B|} \]
by Claim \ref{claim:arcsin}, we have
\[  \|\mathcal{M}_\mathcal{A} - \mathcal{M}_\mathcal{B}\| \le \frac{\pi^2}{2 \sqrt{3}} t \sqrt{|\mu_A - \mu_B|}. \]
Within Algorithm \ref{alg:meansub}, Theorem \ref{thm:meanerr} is applied to $v(\mathcal{A}_{2^{\ell-1},2^{\ell}})/2^\ell$ for various values of $\ell$. We have
\beas
| \E[v(\mathcal{A}_{2^{\ell-1},2^{\ell}})/2^\ell] - \E[v(\mathcal{B}_{2^{\ell-1},2^{\ell}})/2^\ell] | &=& \frac{1}{2^\ell} \sum_{2^{\ell-1} \le x < 2^\ell} x | \Pr[v(\mathcal{A}) = x] - \Pr[v(\mathcal{B}) = x]|\\
&\le& \sum_x | \Pr[v(\mathcal{A}) = x] - \Pr[v(\mathcal{B}) = x]|\\
&=& 2 \| \mathcal{D}_\mathcal{A} - \mathcal{D}_\mathcal{B}\| \le 2 \gamma.
\eeas
Thus, for each run of the algorithm which uses $\mathcal{A}$ $t$ times,
\[  \|\mathcal{M}_\mathcal{A} - \mathcal{M}_\mathcal{B}\| \le \frac{\pi^2}{\sqrt{3}} t \sqrt{\gamma}. \]
This is equivalent to the output of the algorithm being a probabilistic mixture of $\mathcal{M}_\mathcal{B}$ and some other distribution $\mathcal{M}$, where the probability of it being $\mathcal{M}$ is at most $\frac{\pi^2}{\sqrt{3}} t \sqrt{\gamma}$.

Algorithm \ref{alg:meanvariance} uses $\mathcal{A}$ $T$ times in total. Each use of $\mathcal{A}$ is either within Algorithm \ref{alg:meansub} or one separate sample from $v(\mathcal{A})$ in Algorithm \ref{alg:meanvariance}. We can similarly think of this sample as being taken from $\mathcal{B}$, except with probability at most $\gamma \le \frac{\pi^2}{\sqrt{3}} \sqrt{\gamma}$. Taking a union bound over all uses of $\mathcal{A}$, we get the claimed bound.
\end{proof}



\bibliographystyle{plain}
\bibliography{../../thesis}

\end{document}